\definecolor{colorhkust}{RGB}{20,43,140}
\definecolor{colortsinghua}{RGB}{116,52,129}
\definecolor{color1}{RGB}{128,0,0}
\newtheorem{lemma}{Lemma}
\newtheorem{theorem}{Theorem}
\newtheorem{proposition}{Proposition}
\newtheorem{definition}{Definition}
\newtheorem{fact}{Fact}
\newcommand{\mini}{\operatorname{minimize}}
\newcommand{\maxi}{\operatorname{maximize}}
\newcommand{\subj}{\operatorname{subject~to}}
\newcommand{\diagg}{\mathrm{diag}}
\newcommand\norm[1]{\|#1\|}
\newcommand{\rev}{\color{black}}
\begin{document}

        \title{Joint  Activity Detection and Channel Estimation for IoT Networks: Phase Transition and Computation-Estimation Tradeoff}
\author{Tao Jiang, \textit{Student Member}, \textit{IEEE}, Yuanming~Shi, \textit{Member}, \textit{IEEE},  
Jun~Zhang,~\IEEEmembership{Senior~Member,~IEEE,}
 and~Khaled~B.~Letaief,~\IEEEmembership{Fellow,~IEEE}
        \thanks{T. Jiang and Y. Shi are with the School of Information Science and Technology, ShanghaiTech University, Shanghai 201210, China (e-mail: \{jiangtao1, shiym\}@shanghaitech.edu.cn).}
 \thanks{J. Zhang and K. B. Letaief are with the Department of Electronic and Computer Engineering, The Hong Kong University of Science and Technology, Hong Kong (e-mail:
\{eejzhang, eekhaled\}@ust.hk).}
        
}
        
        \maketitle
\begin{abstract}
Massive device connectivity is a crucial communication challenge for  Internet of Things (IoT) networks, which consist of a large number of devices with sporadic traffic. In each coherence block, the serving base station needs to identify the active devices and estimate their channel state information for effective communication. By exploiting the sparsity pattern of data transmission, we develop a structured group sparsity estimation method to simultaneously detect the active devices and estimate the corresponding channels. This method significantly reduces the signature sequence length while supporting massive IoT access. To determine the optimal signature sequence length, we study \emph{the phase transition behavior} of the group sparsity estimation problem. Specifically, user activity can be successfully estimated with a high probability when the signature sequence length exceeds a threshold; otherwise, it fails with a high probability. The location and width of the phase transition region are characterized via the theory of conic integral geometry. We further develop a smoothing method to solve the high-dimensional structured estimation problem with a given  limited time budget. This is achieved by sharply characterizing the convergence rate in terms of the smoothing parameter, signature sequence length and estimation accuracy, yielding a trade-off between the estimation accuracy and computational cost. Numerical results are provided to illustrate the accuracy of our theoretical results and the benefits of smoothing techniques.
\end{abstract}

\begin{IEEEkeywords}
Massive IoT connectivity, group sparsity estimation, phase transitions, statistical
dimension, conic integral geometry, and computation-estimation tradeoffs.
\end{IEEEkeywords}

\section{Introduction}
\label{sec:intro}
The explosion of {{small}} and cheap computing devices endowed with sensing and communication capability is paving the way towards the era of Internet of Things (IoT), which is expected to improve people's daily life and bring socio-economic benefits. For example, connecting the automation systems of intelligent buildings  to the Internet enables to control and manage different smart devices to save energy and improve the convenience for residents \cite{al2015internet}. Other applications include smart home, smart city and smart health care \cite{al2015internet}. To provide ubiquitous connectivity to enable such IoT based applications, massive machine-type communications and ultra-reliable and low latency communications become critical in the upcoming 5G networks \cite{dhillon2017wide,Wei_2018sparse}. In particular, in many scenarios, there are huge numbers of devices to be connected to the Internet via the base-station (BS). Thus supporting massive device connectivity is a crucial requirement for IoT networks \cite{de2017random,rajandekar2015survey,xu2018non}. 

Existing cellular standards, including 4G LTE \cite{ghosh2010fundamentals}, are unable to support massive IoT connectivity. Furthermore, the acquisition of the channel state information that is needed for the effective transmissions will bring huge overheads, and thus will make IoT communications even more challenging \cite{rajandekar2015survey}. Fortunately, the IoT data traffic is typically sporadic, i.e., only a few devices are active at any given instant out of all the  devices \cite{wunder2015sparse}. For example, in sensor networks, a device is typically designed to stay in the sleep mode and  is triggered only by external events in order to save energy. By exploiting the sparsity  in the device activity pattern, it  is  possible to design efficient schemes to support simultaneous device activity detection and channel estimation. As it is not feasible to assign orthogonal signature sequences to all the devices, this paper studies the \emph{Joint Activity Detection and channel Estimation (JADE)} problem considering non-orthogonal signature sequences \cite{dai2015non,chen2018sparse}.

\subsection{Related Work}\label{subsec:related}
A growing body of literatures have recently proposed various methods to deal
with  massive device connectivity and the high-dimensional channel estimation
problem. The compressed sensing (CS) based channel estimation techniques
have been proposed by exploiting the sparsity of channel structures in time, frequency,
angular and Doppler domains \cite{choi2017compressed,qin2018sparse, shen2016compressed}.
The spatial and temporal prior information was further exploited
to solve the high-dimensional channel estimation problem in dense wireless
cooperative networks \cite{liu2018massive}. However, in IoT networks
with a limited channel coherence time, it is critical to further exploit the
sparsity in the device activity pattern to enhance the channel estimation performance
\cite{Wei_2018sparse, chen2018sparse}, thereby reducing the training
overhead. Due to the large-scale nature of IoT communications,
it is also critical to develop efficient algorithms to address the computation
issue.

The sporadic  device activity detection problem has recently been  investigated.  {{\rev In the context of cellular networks, the random access scheme
 was investigated in \cite{hasan2013random,bjornson2017random} to deal with
the significant overhead  incurred by the massive number of devices.  In
the
random access scheme, a connection between an active device and the BS shall
be established if the orthogonal signature sequence randomly selected  by
the active device is not used by other devices. This scheme, however,  normally
causes collision among a huge number of devices. To support a massive number
of devices,  we thus focus on the non-orthogonal multiuser access  (NOMA)
  scheme \cite{dai2015non}, which is able to simultaneously serve multiple
devices via nonorthogonal
resource allocation. The opportunities and challenges of NOMA for supporting
massive connectivity are investigated in \cite{dai2015non}. Furthermore,
network densification \cite{shi2017generalized} turns to be a promising way
to improve network capacity,  enable low-latency mobile applications and
support massive device connectivity  by deploying more radio access points
in IoT
networks \cite{al2018small}} .} 

The information theoretical capacity for massive connectivity was studied in  \cite{chen2017capacity}. The sparsity activity pattern  yields a compressed sensing based formulation \cite{chen2018sparse, zhu2011exploiting} to detect the active devices and estimate the channels. {\rev Recall that the channel
state information (CSI) refers to the channel
propagation coefficients that describe how a signal propagates between transmitters
and receivers. In particular, in the related statements of ``prior knowledge
of CSI", CSI refers to the distribution information.}  Assuming perfect channel state information (CSI),  a sparsity-exploiting maximum a posteriori probability (S-MAP) criterion for multi-user detection in CDMA systems was developed in \cite{zhu2011exploiting}. The authors of \cite{schepker2015efficient,du2017efficient} considered the  multi-user detection problem with the aid of channel prior-information. {\rev In \cite{chen2018sparse,liu2018massive1,liu2018massive2},
a joint design of channel estimation and user activity detection via the
approximate
message passing (AMP) algorithm  was developed,  which leverages the statistical
channel information and large-scale fading coefficients to enhance the Bayesian
AMP algorithm with rigorous performance analysis. However, our approach
does not require prior information of the distribution of CSI to reduce the
signaling overhead}.  When assuming no prior knowledge of {\rev{the distribution of}} CSI, the joint user detection and channel estimation approach for  cloud radio access network via the ADMM  algorithm was proposed in \cite{he2017compressive} without performance analysis. 

In this paper, to eliminate the overheads of acquiring large-scale fading coefficients and statistical channel information, we propose a structured group  sparsity estimation approach to solve the JADE problem without prior knowledge of {\rev{the distribution of}} CSI. To determine the optimal signature sequence length, we provide {precise
characterization} for the phase transition behaviors in the structured group sparsity
estimation problem.
Although the bounds on the multi-user detection error in the non-orthogonal multiple access system have been presented in \cite{du2017efficient} based on the restricted isometry property \cite{foucart2013mathematical}, the order-wise estimates are normally not accurate enough for practitioners. A convex geometry approach was thus introduced in \cite{chandrasekaran2012convex} to  provide sharp estimates of the number of required  measurements for exact and robust recovery of structured signals. However, this approach can only provide the success conditions for signal recovery guarantees. Subsequently, the phase transition of a regularized linear inverse problem with random measurements was studied in \cite{amelunxen2014living,oymak2016sharp} based on the theory of conic integral geometry \cite{schneider2008stochastic}, which established both the success and failure conditions for signal recovery. In particular, the location and width of the transition are essentially controlled by the statistical dimension of a descent cone associated with the convex regularizers. However, these results are only applicable in the real domain. It is not yet clear how to apply the appealing methodology developed in \cite{amelunxen2014living} to provide sharp phase transition results for the high-dimensional estimation problem in the complex domain in IoT networks, which will be pursued in this paper. 

The large number of devices in IoT networks raises unique computational challenges when solving the JADE problem with a fixed time budget. Unfortunately, second-order methods like interior point method are inapplicable in large scale optimization problems due to its poor scalability. In contrast, first-order methods, e.g., gradient methods, proximal methods \cite{parikh2014proximal}, alternating direction method of multipliers (ADMM) algorithm \cite{boyd2011distributed,shi2015large}, fast ADMM algorithm \cite{goldstein2014fast} and Nesterov-type algorithms \cite{becker2011templates} are particularly useful for solving large-scale problems. Therefore, we focus on the first-order method in this paper. Furthermore, one way to minimize the computational complexity is to reduce the cost of each iteration by sketching approaches \cite{pilanci2015randomized,pilanci2016iterative}. However, this method is often suitable for solving an over-determined system instead of the under-determined linear system in our case. A different approach is to accelerate the convergence rate without increasing the computational cost of each iteration. It was shown in \cite{oymak2018sharp} that with more data it is possible to increase the step-size in the projected gradient method, thereby achieving a faster convergence rate. The authors of \cite{giryes2018tradeoffs} showed that by modifying the original iterations, it is possible to achieve faster convergence rates to maintain the estimation accuracy without increasing the computational cost of
each iteration considerably. More generally, smoothing techniques such as convex relaxation \cite{chandrasekaran2013computational} or simply adding a nice smooth function to smooth the non-differentiable objective function \cite{lai2013augmented,becker2011templates,bruer2015designing} often achieves a faster convergence rate. However, the amount of smoothing should be chosen carefully to guarantee the performance of sporadic device activity detection in IoT networks. In this paper, the smoothing method will be exploited to solve the high-dimensional group sparsity estimation problem with a fixed time budget by accelerating the convergence rate. This yields a trade-off between the computational cost and estimation accuracy, as increasing the smoothing parameter will normally reduce the estimation accuracy. The trade-off framework further provides guidelines for choosing the signature sequence length to maintain the estimation accuracy. 

{\rev 
        \subsection{Applications in IoT Systems}
        The proposed approach in this paper pervades a large number
of
applications in IoT systems. For instance,  detecting active devices shall
enhance data transmission efficiency in dynamic IoT networks \cite{al2017price}
and wireless sensor networks.  The proposed computation-estimation
trade-off techniques are particularly suitable for real-time wireless  IoT
networks, e.g., vehicular  networks \cite{al2018qos}, as well as providing
fault-tolerance communication and supporting high QoS and QoE requirements
\cite{hasan2017survey}  with low estimation
errors. While the lower computational complexity comes at the cost of relatively
high estimation errors, it shall reduce energy consumption significantly,
and thus is suitable for energy sensitive applications  \cite{mishra2018energy}.
 In addition,  the proposed approaches can be jointly designed with  the
secure access methods,
which shall enable smart applications of IoT devices especially related to
healthcare applications \cite{al2018context}. 
}

\subsection{Contributions}
The major contributions of the paper are summarized as follows:
\begin{itemize}
        \item 
        By exploiting sparsity  in the device activity pattern, we propose a structured group sparsity estimation approach to solve the JADE problem for massive IoT connectivity. Our method is widely applicable and does not depend on the knowledge of channel statistical information and the large-scale fading coefficients.  
        \item 
        Based on the theory of conic integral geometry,
        we provide precise prediction for the location and the width of the phase
        transition region of the sparsity estimation problem via establishing both the failure  and success conditions for signal recovery.  This result provides theoretical guidelines for choosing the optimal signature sequence length to support massive IoT connectivity and channel estimation.
        We also provide evidence that massive {\rev multiple input multiple output (MIMO) system}  is particularly suitable for supporting massive IoT connectivity, as the width of the phase transition region can be narrowed to zero asymptotically as the number of BS antennas increases.

        \item We further contribute this work by computing the statistical dimension for the descent cone of the  group sparsity inducing regularizer to determine the phase transition of the high-dimensional group sparsity estimation problem. The success of this work is based on the proposal of transforming the original complex estimation problem into the real domain, thereby leveraging the theory of conic integral geometry.    

        \item To solve the high-dimensional group sparsity estimation problem with a fixed time budget, we adopt the smoothing method to smooth the non-differentiable group sparsity inducing regularizer to accelerate the convergence rates. We further characterize
the sharp trade-offs between the computational cost and estimation accuracy. This helps guide the signature sequence design to maintain
the estimation accuracy for the smoothed estimator.  Numerical results shall be provided to show the benefits
of smoothing techniques.

\end{itemize}

\emph{Notations}: Uppercase/lowercase boldface letters denote matrices/vectors. For an $L\times 2N$ matrix $\bm Q$, we denote its $i^{th}$ row by $\bm q^{i}$ , its $j^{th}$ column by $\bm q_{j}$. Let $\bm Q_{\mathcal{V}_i}=[(q^i)^T,(q^{i+N})^T]^T$ denote the row submatrix of $\bm Q $ consisting of the rows indexed by $ \mathcal{V}_i=\{i,i+N \} $. The operator $\norm{\cdot}_2, \norm{\cdot}_F, (\cdot)^T,\Re(\cdot),\Im(\cdot) $ stand for transpose, Euclidian norm, Frobenius norm, real part, imaginary part.
$\bm Q\thicksim\mathcal{CN}(\mu,\sigma^2\bm I)$ denotes that each element in $\bm Q$ follows i.i.d. normal distribution with mean $\mu$ and variance $\sigma^2$.

\section{System model and problem formulation}\label{sys}
\subsection{System Model and Problem Formulation}
We consider an IoT network with one BS serving $N$ single-antenna IoT devices, where the BS is equipped with $M$ antennas. The channel vector from device $i$ to the BS is denoted by $\bm{h}_i\in\mathbb{C}^{ M}$, $i=1,\cdots,N$. With sporadic communications, only a few devices are active out of all devices \cite{wunder2015sparse} as shown in Fig.\ref{fig:sys_model}. We consider the synchronized wireless system with block fading. That is, each device is active during a coherence block, and is inactive otherwise.  In each block, we define the device activity indicator as follows: $a_i=1$ if device $i$ is active, otherwise $a_i=0$. Furthermore, we define the set of active devices within a coherence block as $\mathcal{S}=\left\{i| a_i=1, i=1,\cdots,N\right\}$ with $\left|\mathcal{S}\right|$ denoting the number of active devices.

\begin{figure}[htb]
        \begin{minipage}[b]{1\linewidth}
                \centering
                \centerline{\includegraphics[width=0.91\columnwidth]{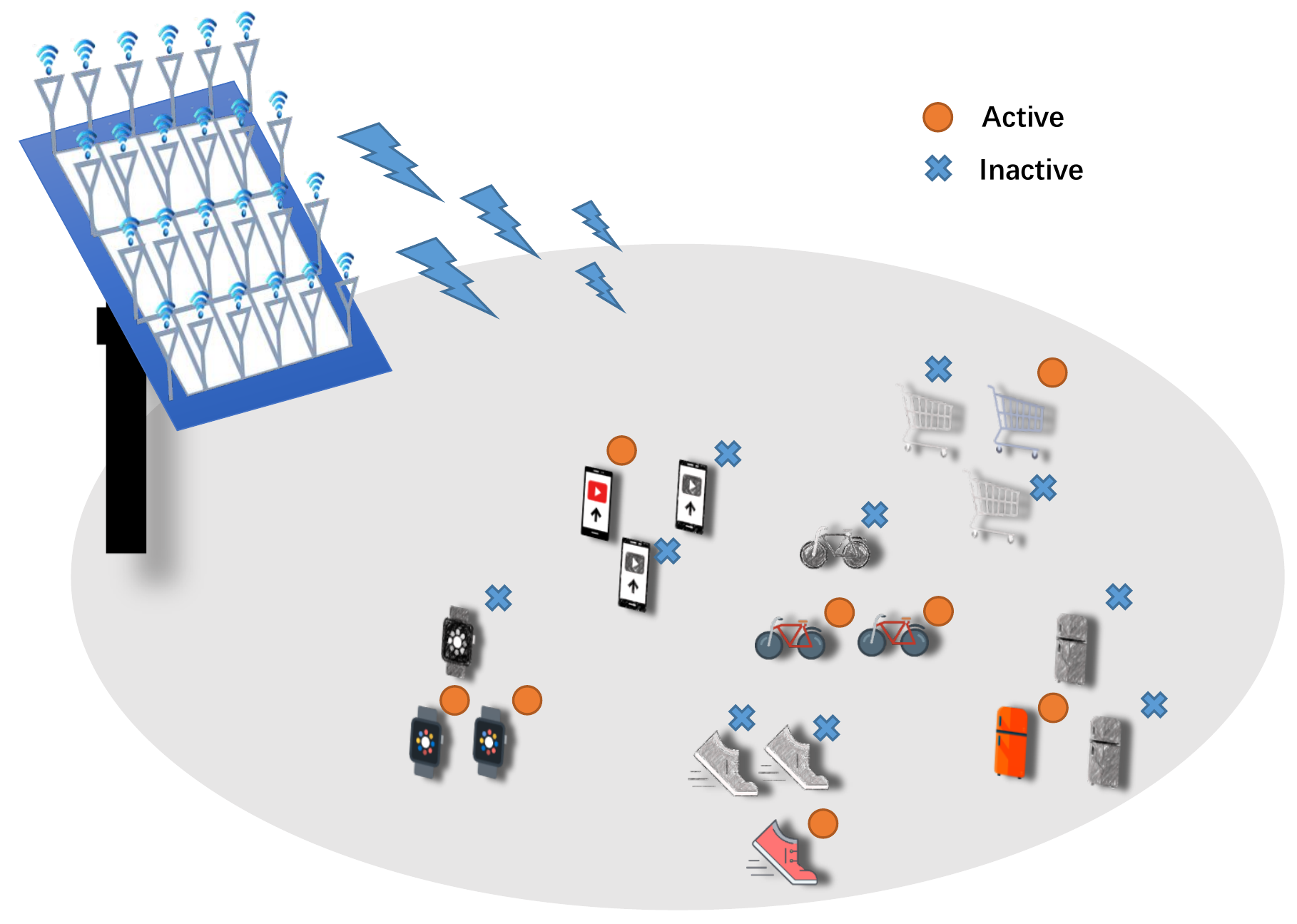}}
        \end{minipage}
        \caption{A typical IoT network with massive sporadic traffic devices.} \label{fig:sys_model}
\end{figure}

For uplink transmission in a coherence block with length $T$, we consider the {\it{Joint Activity Detection and channel Estimation (JADE)}} problem. Specifically, the received signal at the BS is given by
\begin{equation}\label{eq:signal}
        \bm{y}(\ell)=\sum_{i=1}^{N}\bm{h}_i a_i q_i(\ell)+\bm n(\ell) =\sum_{i\in\mathcal{S}}\bm{h}_iq_i(\ell)+\bm n(\ell),
\end{equation}
for all $\ell=1,\dots, L$. Here, $L<T$ is the  length of the signature sequence, $q_i(\ell)\in\mathbb{C}$ is the signature symbol transmitted from device $i$ at time slot $\ell$,  $\bm{y}(\ell)\in\mathbb{C}^{ M}$ is the received signal at the BS, and $\bm{n}(\ell)\in\mathbb{C}^{ M}$ is the additive noise distributed as $\mathcal{CN}(\bm 0,\sigma^2\bm I)$.

With massive devices and a limited channel coherence block, the length of the signature sequence is typically smaller than the total number of devices, i.e., $L \ll N$. It is thus impossible to assign mutually orthogonal sequences to all the devices. As suggested in \cite{chen2018sparse}, we generate the signature sequences from i.i.d. complex Gaussian distribution with zero mean and variance one, i.e., each device $i$ is assigned a unique signature  sequence  $q_i(\ell)\thicksim\mathcal{CN}(0,1),~ \ell=1,\cdots,L $. Notice these sequences are non-orthogonal.

Let $\bm{Y}=[\bm{y}(1),\dots, \bm{y}(L)]^{ T}\in \mathbb{C}^{L\times M}$ denote the received signal across $M$ antennas, $\bm{H}=[\bm{h}_1,\dots, \bm{h}_N]^{ T}\in\mathbb{C}^{N\times M}$ be the channel matrix from all the devices to the BS antennas, and $\bm{Q}= [\bm{q}(1),\dots, \bm{q}(L)]^{ T}\in\mathbb{C}^{L\times N}$ be the known signature matrix with $\bm{q}(\ell)=[q_1(\ell),\dots, q_{N}(\ell)]^ {T}\in\mathbb{C}^{N}$. We  rewrite \eqref{eq:signal} as
\begin{equation}\label{eq:sys_model}
        \bm{Y}=\bm{Q}\bm{A}\bm{H}+\bm{N},
\end{equation}
where $\bm{A}={\rm{diag}}(a_1,\dots, a_n)\in\mathbb{R}^{N
        \times N}$ is the diagonal activity matrix and $\bm{N}=[\bm{n}(1),\dots,
\bm{n}(L)]\in\mathbb{C}^{L\times M}$ is the additive noise matrix.
Our goal  is to jointly estimate the channel matrix $\bm{H}$ and detect the activity matrix $\bm{A}$.

Let $\bm{\Theta}^{0}=\bm{A}\bm{H}\in\mathbb{C}^{N\times M}$ with $\bm{A}$ as the sparse diagonal activity matrix. Matrix $\bm{\Theta}_0$ thus has the structured group sparsity pattern in its rows \cite{Wainwright2014structured}. The linear measurement model \eqref{eq:sys_model} can be further rewritten as
\begin{equation}\label{eq:formulation}
        \bm{Y}=\bm{Q}\bm{\Theta}_0+\bm{N}.
\end{equation}

To estimate the group row sparse matrix $\bm{\Theta}_0$, we introduce the following convex group sparse inducing norm (i.e., mixed $\ell_1/\ell_2$-norm) in the form of \cite{Wainwright2014structured}
\begin{equation}
        \mathcal{R}(\bm{\Theta}):=\sum_{i=1}^{N}\norm{\bm\theta^i}_2,
\end{equation}
where $\bm{\theta}^i\in\mathbb{C}^{1\times M}$ is the $i$-th row of matrix $\bm \Theta$. This norm will help to induce a group sparsity structure in the solution. The resulting group sparse matrix estimation problem, i.e., the JADE problem, can thus be formulated as the following convex optimization problem:
\begin{equation}\label{eq:formulation}
        \begin{split}
                \mathscr{P}:~
                &\underset{\bm \Theta\in\mathbb{C}^{N\times M}}{\mini}
                ~ \mathcal{R}(\bm{\Theta})\\
                &{\subj}~ \norm{\bm{Q\Theta}-\bm Y}_F\le \epsilon,
        \end{split}
\end{equation}
where $ \epsilon $ is an upper bound on $ \norm{\bm N}_F $ and assumed to be known
as a \emph{priori}.  Given the estimate matrix ${\bm{\hat{\Theta}}}$, the activity matrix can be recovered as ${\bm{\hat{A}}}={\diagg}(\hat{a}_1,\cdots,\hat{a}_{n})$, where $\hat{a}_i=1$ if $\norm{\bm{\hat\theta}^i}_2\geq \gamma_0$ for a small enough threshold $\gamma_0 (\gamma_0\geq0)$; otherwise, $\hat{a}_i=0$. The estimated channel matrix for the active devices is thus given by $\bm{\hat{H}}$ with its $i$-th row as $\bm{\hat{h}}^i=\bm{\hat{\theta}}^i$ where $i\in\{j|\hat{a}_j=1 \}$.  

\subsection{Problem Analysis}
\subsubsection{Phase Transitions}Due to the limited radio resources, it is critical to \emph{precisely} find the minimal number of signature symbols $L$ to support massive device access. This can be achieved by precisely revealing the locations of the \emph{phase transition} of the high-dimensional group sparsity estimation problem via solving the convex optimization problem $ \mathscr{P}$. Although recent years have seen progresses on structured signal estimation \cite{donoho2006compressed, candes2006robust, chandrasekaran2012convex}, they only provide a success condition for signal recovery without precise phase transition analysis. The recent work  \cite{amelunxen2014living} provided a principled framework to predict phase transitions (including the location and width of the transition region) for random cone programs  \cite{hu2018performance} via the theory of conic integral geometry. Unfortunately, the approach based on conic integral geometry is only applicable in the real field case, which thus cannot be directly applied for problem $\mathscr{P}$ in the complex field. To address this issue, we  propose to approximate the original complex estimation problem $\mathscr{P}$ by a real estimation problem, followed by precise phase transition analysis via conic integral geometry \cite{amelunxen2014living}. Theoretical results and numerical experiments
will  provide evidences that the approximations are quite tight.  We shall prove that the locations of phase transitions are determined by the intrinsic geometry invariants (i.e., the statistical dimension) associated with the high-dimensional estimation
problem $\mathscr{P}$. In particular,
we will show that the width of the transition region can be reduced to zero asymptotically in the
limit as the number of antennas at the BS goes to infinity.
Therefore, massive MIMO is especially well-suited for supporting massive IoT
connectivity by providing accurate phase transition location.

\subsubsection{Computation and Estimation Trade-offs} To address the computational challenges in massive IoT networks with a limited time budget, we adopt the smoothing method to smooth the non-differentiable group sparsity inducing regularizer to accelerate the convergence rates. The computational speedups can be achieved by projecting onto simpler sets \cite{chandrasekaran2013computational},
varying the amount of smoothing \cite{bruer2015designing}, or adjusting the step
sizes \cite{oymak2018sharp} applied to the optimization algorithms. However, the computational speedups will normally reduce the estimation accuracy. Based
on the phase transition results, we shall propose to control the
amount of smoothing to achieve sharp computation and estimation tradeoffs
for the smoothed optimization problem $\mathscr{P}$ via the smoothing method. The smoothed formulation
can be further efficiently solved via various efficient first-order methods with cheap iterations and low memory cost, e.g., gradient methods, proximal methods \cite{parikh2014proximal}, alternating direction method of multipliers (ADMM) algorithm \cite{boyd2011distributed}, fast ADMM algorithm \cite{goldstein2014fast} and Nesterov-type algorithms
\cite{becker2011templates}.

\section{Precise Phase transition analysis}
In this section, we study the phase transition phenomenon when solving the JADE problem. 
\begin{figure}[htb]
        \begin{minipage}[b]{1\linewidth}
                \centering
                \centerline{\includegraphics[width=0.96\columnwidth]{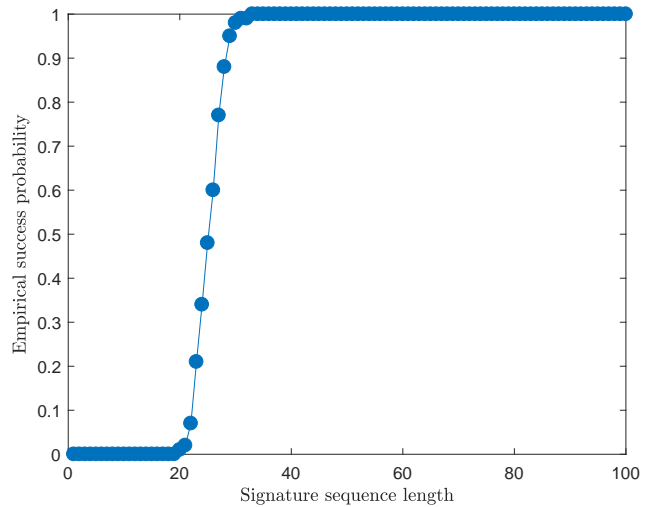}}
        \end{minipage}
        \caption{Empirical success probability for solving problem $\mathscr{P}$ via CVX \cite{cvx} in the noiseless case.
                The  base station is equipped with $2$ antennas, the total number of devices is $100$ and the number of active device is $ 10 $. The channel matrix and signature matrix are generated  as  $\bm H\thicksim\mathcal{CN}(\bm 0,\bm I)$ and $\bm Q\thicksim\mathcal{CN}(\bm 0,\bm I)$, respectively. We declare successful recovery if $\norm{\bm{\hat\Theta}-\bm{\Theta_0}}_F\leq 10^{-5}$ and each point is averaged for $ 100 $ times.} \label{fig:emp_suc_prob}
\end{figure}
An example of such phenomenon is demonstrated in  Fig. \ref{fig:emp_suc_prob}, from which we see that the empirical success probability changes from $ 0 $ to $ 1 $ sharply. In particular, this indicates that when the base station is equipped with $ 2 $ antennas,  the signature sequence length around 30 is sufficient to achieve exact signal recovery for $ 100 $ devices where $ 10 $ of them are active. Thus if we can accurately find the location of the phase transition, we may choose a minimal signature sequence length accordingly to support massive IoT connectivity and channel estimation. 
  
In the following, we provide precise analysis of the location and width of the phase transition region via characterizing both success and failure conditions for signal recovery based on the conic geometry, followed by computing the probability for holding the conic optimality conditions. 
 
\subsection{Optimality Condition and Convex Geometry}
We consider the real-valued counterpart of  the statistical optimization problem $\mathscr{P}$ as follows:
\begin{equation}
\begin{split}
\mathscr{P}_{r}:~
&\underset{\bm{\tilde\Theta}\in\mathbb{R}^{2N\times M}}{\mini}
~\mathcal{R}_G(\bm{\tilde\Theta})\\
&{\subj}~ \norm{\bm{\tilde Q}\bm{\tilde\Theta}-\bm{\tilde Y}}_F \le \epsilon,
\end{split}
\end{equation}
where the linear observation in the real domain is given by
\begin{align}\label{eq:1}
        \bm{\tilde Y} &=\bm{\tilde Q}\bm{\tilde\Theta}_0+\bm{\tilde N}\notag\\
        &=\begin{bmatrix}
                \Re\left\{\bm Q\right\} &-\Im\left\{\bm Q\right\}\\
                \Im\left\{\bm Q\right\} &\Re\left\{\bm Q\right\}
        \end{bmatrix}
        \begin{bmatrix}
                \Re\left\{\bm\Theta_0\right\}\\
                \Im\left\{\bm\Theta_0\right\}
        \end{bmatrix}+
        \begin{bmatrix}
                \Re\left\{\bm N\right\}\\
                \Im\left\{\bm N\right\}
        \end{bmatrix},
\end{align}
and the regularizer is defined as 
$ \mathcal{R}_G(\bm{\tilde\Theta})=\sum_{i=1}^{N}
\norm{\bm{\tilde\Theta}_{\mathcal{V}_i}}_F $. Here
$\bm{\tilde\Theta}_{\mathcal{V}_i}=[(\bm{\tilde\theta}^i)^T,
(\bm{\tilde\theta}^{i+N})^T]^T$ is the row submatrix of $\tilde{\bm{\Theta}}$ consisting of the rows indexed by $ \mathcal{V}_i=\{i,i+N \} $. 

To facilitate phase transition analysis, problem $\mathscr{P}_{r}$ can be further approximated as the following structured group sparse estimation problem with group size $2M$:
\begin{equation}
\begin{split}
\mathscr{P}_{\textrm{approx}}:~
&\underset{\bm{\tilde\Theta}\in\mathbb{R}^{2N\times M}}{\mini}
~\mathcal{R}_G(\bm{\tilde\Theta})\\
&{\subj}~ \norm{\bm{\bar  Q}\bm{\tilde\Theta}-\bm{\tilde Y}}_F \le \epsilon,
\end{split}
\label{appreal}
\end{equation}
where $\bm{\bar Q}\in\mathbb{R}^{2L\times 2N}\thicksim\mathcal{N}(\bm 0,0.5\bm I)$ is a Gaussian random matrix. The phase transition of the approximated problem $\mathscr{P}_{\textrm{approx}} $ is empirically demonstrated to coincide with the original problem $\mathscr{P}_{r}$  \cite{yang2012phase,shen2016compressed} with structured distribution in the measurement matrix $\bm{\tilde Q}$. This will be further verified in the numerical experiments in Section \ref{sec:sim}. Additionally, there are extensive empirical evidences \cite{donoho2009observed,yilmaz2017compressed}  showing that the distribution of the random measurement matrix has little effect on the locations of phase transitions. We thus focus on characterizing the phase transitions of the approximate problem $\mathscr{P}_{\textrm{approx}} $ in the real field. 

To make the presentation clear, we first characterize the phase transitions in the noiseless  case and then extend the results to the noisy case. In the noiseless  case, we rewrite problem $\mathscr{P}_{\textrm{approx}} $ as follows:
\begin{equation}\label{eq:}
        \begin{split}
                \mathscr{P}_a:~
                &\underset{\bm{\tilde\Theta}\in\mathbb{R}^{2N\times M}}{\mini}~
                \mathcal{R}_G(\bm{\tilde\Theta})\\
                &{\subj}~ \bm{\tilde Y}=\bm{\bar Q}\bm{\tilde\Theta}.
        \end{split}
\end{equation}
Problem $\mathscr{P}_a$ is said to succeed for exact recovery when it has a unique optimal points $\bm{\tilde{\Theta}}^{\ast}$, which equals the ground-truth $\bm{\tilde\Theta}_0$; otherwise, it fails. Here, the phase transition refers to the phenomenon that problem $\mathscr{P}_a$ changes from the failure state to  the successful state as the sequence length $L$ increases.
In order to establish the optimality condition for problem $\mathscr{P}_a$, we present the following definition in convex analysis \cite{amelunxen2014living}.

\begin{definition}(Descent Cone):
        The descent cone $\mathcal{D}(\mathcal{R},\bm x)$ of a proper convex function
        $\mathcal{R}:\mathbb{R}^d\rightarrow\mathbb{R}\cup\{\pm\infty\}$ at point $\bm x\in
        \mathbb{R}^d$ is the conic hull of the perturbations that do not increase
        $\mathcal{R}$ near $\bm x$, i.e.,
        \begin{equation*}
                \mathcal{D}(\mathcal{R},\bm x)=\underset{\tau>0}{\bigcup}\left\{\bm y\in\mathbb{R}^d:\mathcal{R}(\bm x+
                \tau\bm y)\leq \mathcal{R}(\bm x)\right\}.
        \end{equation*}
\end{definition}

Let ${\rm null}(\bm{\bar{Q}}, M)=\{\bm Z\in\mathbb{R}^{2N\times M}: \bm{\bar{Q}Z}=\bm{0}_{2L\times M}\}$ denote the null space of the operator $\bm{\bar{Q}}\in\mathbb{R}^{2L\times 2N}$.
With the aid of the descent cone \cite{rockafellar2015convex}, we shall establish the necessary and sufficient condition for the success of problem $\mathscr{P}_{a}$ via convex analysis \cite{chandrasekaran2012convex,amelunxen2014living}.

\begin{fact}(Optimality Condition):
        Let $\mathcal{R}$ be a proper convex function. Matrix $\bm{\tilde\Theta}_0$ is the unique optimal solution to problem $\mathscr{P}_{a}$ if and only if $\mathcal{D}(\mathcal{R}_G , \bm{\tilde\Theta}_0)\bigcap {\rm null}(\bm{\bar Q},M) = \{\bm 0\}$.
\label{fact1}
\end{fact}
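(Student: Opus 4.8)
The plan is to prove the ``if and only if'' by transferring the characterization of unique optimality for the constrained problem $\mathscr{P}_a$ into a statement about the trivial intersection of the descent cone with the null space of $\bm{\bar Q}$. First I would unpack what it means for $\bm{\tilde\Theta}_0$ to be \emph{an} optimal solution: since $\bm{\tilde Y} = \bm{\bar Q}\bm{\tilde\Theta}_0$, the feasible set of $\mathscr{P}_a$ is the affine subspace $\bm{\tilde\Theta}_0 + {\rm null}(\bm{\bar Q}, M)$. Any feasible point can be written as $\bm{\tilde\Theta}_0 + \bm Z$ with $\bm Z \in {\rm null}(\bm{\bar Q}, M)$, so $\bm{\tilde\Theta}_0$ is optimal iff $\mathcal{R}_G(\bm{\tilde\Theta}_0 + \bm Z) \ge \mathcal{R}_G(\bm{\tilde\Theta}_0)$ for every $\bm Z \in {\rm null}(\bm{\bar Q}, M)$, and it is \emph{uniquely} optimal iff this inequality is strict for every nonzero such $\bm Z$.

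Next I would connect this to the descent cone. By definition, $\bm Z \in \mathcal{D}(\mathcal{R}_G, \bm{\tilde\Theta}_0)$ means there is $\tau > 0$ with $\mathcal{R}_G(\bm{\tilde\Theta}_0 + \tau \bm Z) \le \mathcal{R}_G(\bm{\tilde\Theta}_0)$. For the forward direction (optimality implies trivial intersection), suppose toward contradiction that there is a nonzero $\bm Z \in \mathcal{D}(\mathcal{R}_G, \bm{\tilde\Theta}_0) \cap {\rm null}(\bm{\bar Q}, M)$. Then $\tau\bm Z$ is still in the null space (null spaces are linear), and $\bm{\tilde\Theta}_0 + \tau\bm Z$ is feasible with $\mathcal{R}_G(\bm{\tilde\Theta}_0 + \tau\bm Z) \le \mathcal{R}_G(\bm{\tilde\Theta}_0)$, contradicting \emph{unique} optimality (here I use that $\mathcal{R}_G$ attains its minimum over the line segment, or I argue directly that a nonstrict decrease along a feasible direction already breaks uniqueness, using convexity of $\mathcal{R}_G$ to rule out the degenerate case). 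For the reverse direction, assume the intersection is trivial and take any nonzero feasible perturbation $\bm Z \in {\rm null}(\bm{\bar Q}, M)$; since $\bm Z \notin \mathcal{D}(\mathcal{R}_G, \bm{\tilde\Theta}_0)$, we have $\mathcal{R}_G(\bm{\tilde\Theta}_0 + \tau\bm Z) > \mathcal{R}_G(\bm{\tilde\Theta}_0)$ for all $\tau > 0$, in particular $\tau = 1$, which gives strict improvement and hence unique optimality.

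The one point requiring a little care — and the main (mild) obstacle — is the interplay between the \emph{closed} versus \emph{open} version of the descent cone and the strictness of the inequality. The descent cone as defined is a union of sublevel-set-based cones and is generally not closed; a nonzero $\bm Z$ in the null space with $\mathcal{R}_G(\bm{\tilde\Theta}_0 + \tau\bm Z) = \mathcal{R}_G(\bm{\tilde\Theta}_0)$ for some $\tau$ lies in $\mathcal{D}(\mathcal{R}_G, \bm{\tilde\Theta}_0)$ by definition, so this case is already handled. The subtler scenario is a boundary direction along which $\mathcal{R}_G$ is strictly increasing for every $\tau>0$ yet which lies in the closure of the descent cone; but since Fact~\ref{fact1} is stated with the (non-closed) descent cone itself, such directions are genuinely outside $\mathcal{D}(\mathcal{R}_G, \bm{\tilde\Theta}_0)$ and the argument goes through verbatim. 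I would cite the standard treatment in \cite{chandrasekaran2012convex,amelunxen2014living,rockafellar2015convex} for these convex-analytic facts and note that $\mathcal{R}_G$ is a proper convex function (a sum of norms of linear images), so all the needed regularity is automatic. This completes both directions and the proof.
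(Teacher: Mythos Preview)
Your argument is correct and is precisely the standard convex-analysis proof. Note, however, that the paper does not actually prove Fact~\ref{fact1}: it is stated as a known result from \cite{chandrasekaran2012convex,amelunxen2014living} and merely illustrated geometrically in Fig.~\ref{fig:opt_condition}. Your write-up simply makes explicit the argument that those references contain, so there is no substantive difference in approach to report.
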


\begin{figure}[htb]
        \begin{minipage}[b]{.493\linewidth}
                \centering
                \centerline{\includegraphics[width=4.45cm]{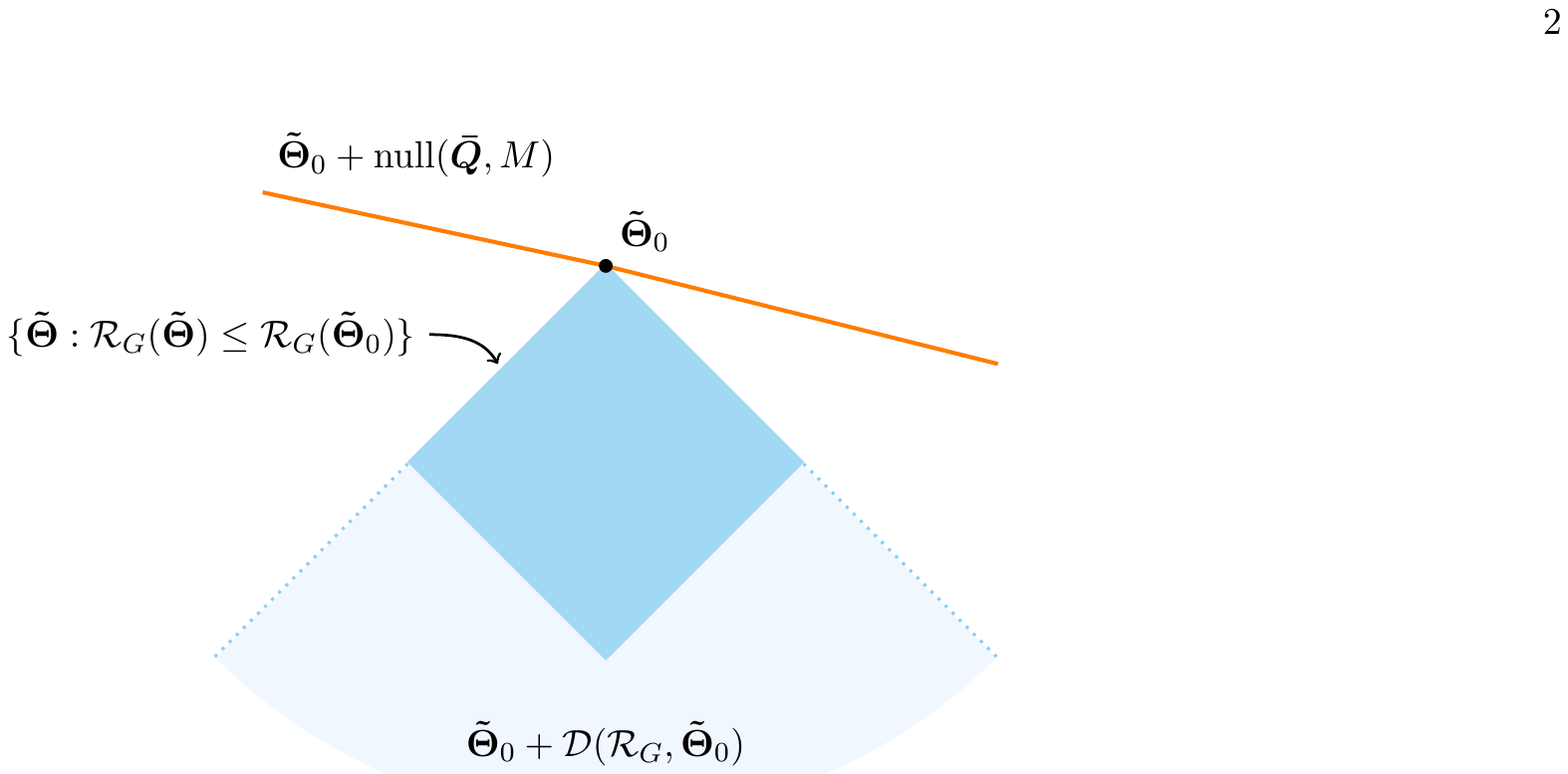}}
                \centerline{(a) $\mathscr{P}_{a}$ succeeds}\medskip
        \end{minipage}
        \hfill
        \begin{minipage}[b]{0.49\linewidth}
                \centering
                \centerline{\includegraphics[width=4.45cm]{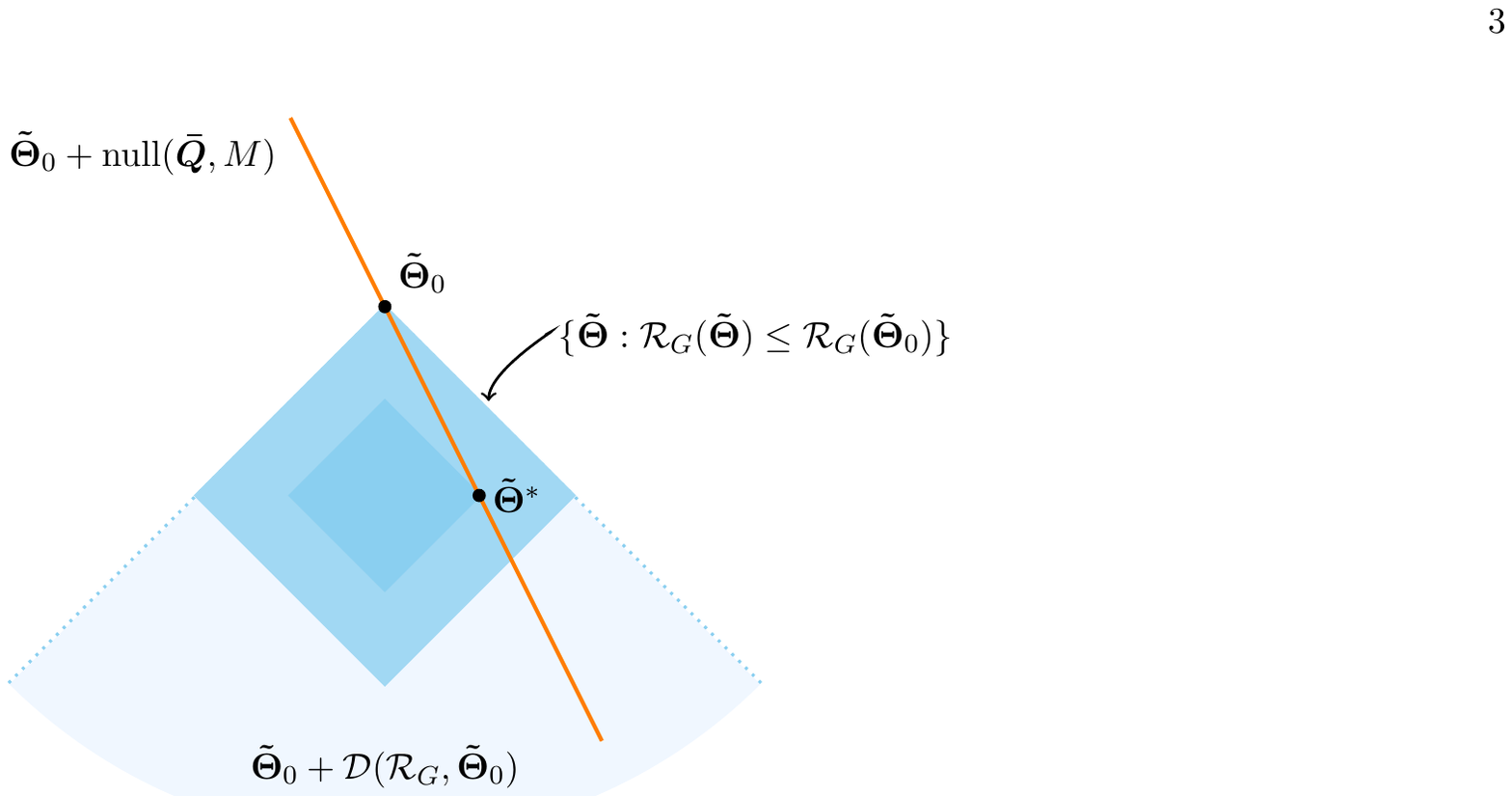}}
                \centerline{(b) $\mathscr{P}_{a}$ fails}\medskip
        \end{minipage}
        \caption{Optimality condition for problem $\mathscr{P}_{a}$.}\label{fig:opt_condition}
\end{figure}
Fig. \ref{fig:opt_condition} illustrates the geometry of this optimality condition. Problem $\mathscr{P}_{a}$ succeeds if and only if the null space of $\bm{\bar Q}$ misses the cone of descent directions of $\mathcal{R}_G$ at the ground-truth $\bm{\tilde\Theta}_0$; otherwise it fails since the optimal solution is $\bm{\tilde\Theta}^{\ast}\neq\bm{\tilde\Theta}_0$ as illustrated in Fig. \ref{fig:opt_condition} (b). Intuitively, a smaller size of the decent cone  will lead to  a higher successful recovery probability of $ \mathscr{P}_a $. It is thus critical to characterize the size of the decent cone to depict the phase transition phenomena.

Based on  the optimality condition, the phase transition problem is transformed into a classic problem in conic integral geometry: {{what is the probability that a randomly rotated convex cone shares a ray with a fixed convex cone?}} The Kinematic formula \cite{schneider2008stochastic} provides an exact formula for computing this probability. However, this exact formula is hard to calculate. We thus present a practical formula that characterizes the phase transition in two intersection cones in terms of the statistical dimension \cite{amelunxen2014living}. 

\begin{definition}(Statistical Dimension):
        The statistical dimension $\delta(C)$ of a closed convex cone $C$ in $\mathbb{R}^d$ is defined as:
        \begin{equation*}
                \delta(C) = \mathbb{E}[\norm{\bm\Pi_C(\bm g)}_2^2],
        \end{equation*}
        where $\bm g\in\mathbb{R}^d$ is a standard normal vector, $\norm{\cdot}_2$ is the Euclidean norm, and $\bm\Pi_C(\bm x)= \arg\min\{\norm{\bm x-\bm y}_2:\bm
y\in C\}$ denotes the Euclidian projection onto $C$.
\end{definition}
The statistical dimension allows us to measure the size of  convex cones and is the  generalization of the dimension of linear subspaces. We state  the approximated conic kinematic formula based on the statistical dimensions of general convex cones \cite{amelunxen2014living}.
\begin{theorem}(Approximate Kinematic Formula):
        Fix a tolerance $\eta\in(0,1)$. Let $C$ and $K$ be convex cones
        in $\mathbb{R}^d$, but one of them is not a subspace. Draw a random orthogonal basis $\bm U$. Then
        \begin{equation*}
                \begin{split}
                        \delta(C)+\delta(K)\leq d-a_{\eta}\sqrt{d}&\Longrightarrow \mathbb{P}\{C\cap \bm U K\neq \{\bm 0\} \}\leq\eta \\
                        \delta(C)+\delta(K)\geq d+a_{\eta}\sqrt{d}&\Longrightarrow \mathbb{P}\{C\cap \bm U K\neq \{\bm 0\} \}\geq 1-\eta
                \end{split}
        \end{equation*}
        where $a_\eta:=\sqrt{8\log(4/\eta)}$.
        \label{thm:akf}
\end{theorem}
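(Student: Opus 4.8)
The plan is to derive this statement from conic integral geometry exactly as in \cite{amelunxen2014living}, so in a final write-up one would simply cite that paper; still, the argument assembles two transparent ingredients that I would lay out as follows. Write $I:=\mathbb{P}\{C\cap\bm U K\neq\{\bm 0\}\}$ for the collision probability. Recall that each closed convex cone $C\subseteq\mathbb{R}^d$ carries conic intrinsic volumes $v_0(C),\dots,v_d(C)$ that are nonnegative and sum to one, hence define the law of an integer-valued random variable $V_C\in\{0,\dots,d\}$ with $\mathbb{P}\{V_C=k\}=v_k(C)$, whose mean equals the statistical dimension, $\mathbb{E}[V_C]=\delta(C)$ (and likewise $\mathbb{E}[V_K]=\delta(K)$).

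\emph{Step 1: reduce the collision event to the sum $V_C+V_K$.} The exact conic Kinematic formula from \cite{schneider2008stochastic} expresses $I$ as an explicit linear combination of the products $v_i(C)v_j(K)$; grouping terms by the value of $i+j$ and using $\sum_k v_k=1$ yields a sandwich of the form
\begin{equation*}
\mathbb{P}\{V_C+V_K\ge d+1\}\ \le\ I\ \le\ 2\,\mathbb{P}\{V_C+V_K\ge d\},
\end{equation*}
with $V_C$ and $V_K$ taken independent. The hypothesis that at least one of $C,K$ is not a linear subspace is precisely what prevents the corresponding intrinsic-volume law from collapsing onto a single atom and removes the parity obstruction that would otherwise block this reduction.

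\emph{Step 2: plug in concentration of conic intrinsic volumes.} The variable $V_C$ clusters about $\delta(C)$ with a sub-Gaussian tail, $\mathbb{P}\{\,|V_C-\delta(C)|\ge t\,\}\le 2\exp\!\big(-t^2/(4(\delta(C)\wedge\delta(C^\circ)+t))\big)$, and the same for $V_K$. Suppose $\delta(C)+\delta(K)\le d-a_\eta\sqrt d$ with $a_\eta=\sqrt{8\log(4/\eta)}$. Split the gap $a_\eta\sqrt d$ evenly between the two variables: applying the tail bound at the scale $t=\tfrac12 a_\eta\sqrt d$ makes each of $\mathbb{P}\{V_C\ge\delta(C)+\tfrac12 a_\eta\sqrt d\}$ and $\mathbb{P}\{V_K\ge\delta(K)+\tfrac12 a_\eta\sqrt d\}$ at most $\eta/4$, so a union bound gives $\mathbb{P}\{V_C+V_K\ge d\}\le\eta/2$, and the right half of the display yields $I\le\eta$ --- the first implication. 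For the second, use polarity: $\delta(C)+\delta(C^\circ)=d$, so $\delta(C)+\delta(K)\ge d+a_\eta\sqrt d$ is equivalent to $\delta(C^\circ)+\delta(K^\circ)\le d-a_\eta\sqrt d$; combined with the identity $I=1-\mathbb{P}\{C^\circ\cap\bm U K^\circ\neq\{\bm 0\}\}$ (valid because, for cones that are not subspaces, $C$ and $\bm U K$ share a ray exactly when $C^\circ$ and $\bm U K^\circ$ do not), the bound just proved for the polar cones gives $I\ge 1-\eta$. Alternatively, the lower tail of the same concentration inequality applied to $V_C+V_K$, together with the left half of the display, gives this directly.

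The step I expect to be the genuine obstacle is Step~2's concentration inequality itself: establishing the sub-Gaussian tail of $V_C$ requires bounding the moment generating function of $V_C$ through a ``master Steiner formula'' and a somewhat delicate convexity argument, and it is there that the fluctuation scale $\sqrt d$ and the precise form of $a_\eta$ are pinned down. Since this, along with the exact Kinematic formula, is the content of \cite{amelunxen2014living,schneider2008stochastic}, in the paper we take it as given; the regrouping in Step~1 and the deviation bookkeeping in Step~2 are then routine.
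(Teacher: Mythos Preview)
Your plan coincides with the paper's treatment: Theorem~\ref{thm:akf} is not proved in the paper at all but is quoted directly from \cite{amelunxen2014living}, and you explicitly propose to do the same while sketching the two ingredients (kinematic formula reduced to tails of $V_C+V_K$, then concentration of intrinsic volumes) that constitute that reference's argument.

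One correction to your sketch, though it does not break the argument. The parenthetical justification of the polarity identity is false as an event-level statement: it is \emph{not} true that $C\cap\bm UK\neq\{\bm 0\}$ exactly when $C^\circ\cap\bm UK^\circ=\{\bm 0\}$. For a counterexample take $C=K$ equal to the nonnegative quadrant in $\mathbb R^2$; then $C^\circ=-C$ and $K^\circ=-K$, so $C^\circ\cap\bm UK^\circ=-(C\cap\bm UK)$ and the two events are identical, not complementary. What does hold is the \emph{probability} identity $\mathbb P\{C\cap\bm UK\neq\{\bm 0\}\}+\mathbb P\{C^\circ\cap\bm UK^\circ\neq\{\bm 0\}\}=1$, which one obtains from the Crofton formula via $v_i(C^\circ)=v_{d-i}(C)$ together with the Gauss--Bonnet relation $\sum_{i\text{ even}}v_i(C)=\sum_{i\text{ odd}}v_i(C)=\tfrac12$ for non-subspace cones. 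That suffices for your purpose, and in any case your stated alternative---applying the lower-tail concentration to $V_C+V_K$ directly against the left inequality of Step~1---is correct and is how \cite{amelunxen2014living} actually argues both implications.
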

This theorem indicates a phase transition on whether the two randomly
rotated cones sharing a ray. That is, when the  total statistical
dimension of the two cones exceeds the ambient dimension $d$, the two randomly rotated cones share a ray with high probability; otherwise, they fail to share a ray.

\subsection{Phase Transition for Massive IoT Connectivity}
Based on general results in Theorem {\ref{thm:akf}}, we shall present the phase transition results for the exact recovery of the program $\mathscr{P}_a$ in the noiseless case and robust recovery in the noisy case. 
\subsubsection{Phase Transition in the Noiseless  Case}
To predict phase transitions of program $\mathscr{P}_a$ for signal recovery, we essentially need to compute the probability for holding the optimality condition in Fact {\ref{fact1}}. Specifically, for  Gaussian random matrix $ \bm{\bar{Q}} $, its nullity is $2N-2L  $ with probability one. Therefore, the statistical dimension of   ${\rm null}(\bm{\bar{Q}}, M)$ is $\delta({\rm null}(\bm{\bar{Q}}, M)) =  2(N-L)M $.
By replacing convex cones $C$ and $K$ in Theorem \ref{thm:akf} by the descent cone $\mathcal{D}(\mathcal{R}_G,\bm{\tilde\Theta}_0)$ and the subspace ${\rm null}(\bm{\bar{Q}}, M)$, we have the following recovery guarantees for signal recovery via program $\mathscr{P}_{a}$.
\begin{theorem}(Phase Transition of Problem $\mathscr{P}_{a}$):\label{thm:ptp3}
        Fix a tolerance $\eta\in(0,1)$. Let $\bm{\tilde\Theta}_0\in\mathbb{R}^{2N\times M}$ be a fixed matrix. Suppose $\bm{\bar Q}\in\mathbb{R}^{2L\times 2N}\thicksim\mathcal{N}(\bm 0,\bm I)$ , and let $\bm{\tilde{Y}}=\bm{\bar Q}\bm{\tilde\Theta}_0$. Then
        \begin{equation*}\small
                \begin{split}
                        2L\geq \frac{\delta(\mathcal{D}(\mathcal{R}_G,\bm{\tilde\Theta}_0))}{M} +\frac{a_{\eta}\sqrt{2NM}}{M}  &\Rightarrow \mathbb{P}\{\mathscr{P}_a ~ {\text{succeeds}} \}\geq 1-\eta \\
                        2L\leq \frac{\delta(\mathcal{D}(\mathcal{R}_G,\bm{\tilde\Theta}_0))}{M} -\frac{a_{\eta}\sqrt{2NM}}{M}  &\Rightarrow \mathbb{P}\{\mathscr{P}_a ~ \text{succeeds} \}\leq \eta
                \end{split}
        \end{equation*}
        where $a_\eta:=\sqrt{8\log(4/\eta)}$.
\end{theorem}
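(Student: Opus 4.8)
The plan is to rephrase the success event of $\mathscr{P}_a$ as a conic intersection event and then apply the Approximate Kinematic Formula (Theorem~\ref{thm:akf}). First I would invoke the optimality condition of Fact~\ref{fact1}: $\mathscr{P}_a$ succeeds, i.e.\ $\bm{\tilde\Theta}_0$ is its unique minimizer, if and only if $\mathcal{D}(\mathcal{R}_G,\bm{\tilde\Theta}_0)\cap{\rm null}(\bm{\bar Q},M)=\{\bm 0\}$. Hence $\mathbb{P}\{\mathscr{P}_a\text{ succeeds}\}=1-\mathbb{P}\{\mathcal{D}(\mathcal{R}_G,\bm{\tilde\Theta}_0)\cap{\rm null}(\bm{\bar Q},M)\neq\{\bm 0\}\}$, and I would take $C=\mathcal{D}(\mathcal{R}_G,\bm{\tilde\Theta}_0)$ and $K={\rm null}(\bm{\bar Q},M)$ as the two cones fed into Theorem~\ref{thm:akf}, living in the ambient space $\mathbb{R}^{2N\times M}\cong\mathbb{R}^{2NM}$ of dimension $d=2NM$.

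Next I would pin down the geometry of $C$ and $K$. Since $\mathcal{R}_G$ is a norm and $\bm{\tilde\Theta}_0\neq\bm 0$, the descent cone $\mathcal{D}(\mathcal{R}_G,\bm{\tilde\Theta}_0)$ is a proper closed convex cone that is not a subspace, which is precisely the non-degeneracy hypothesis of Theorem~\ref{thm:akf}. For $K$, a matrix lies in ${\rm null}(\bm{\bar Q},M)$ exactly when each of its columns lies in the kernel of $\bm{\bar Q}$ on $\mathbb{R}^{2N}$; since $\bm{\bar Q}$ is Gaussian it has full row rank almost surely, so that kernel has dimension $2N-2L$ and $K$ is a linear subspace of $\mathbb{R}^{2NM}$ of dimension $M(2N-2L)=2(N-L)M$, giving $\delta(K)=2(N-L)M$ (as already recorded in the preceding discussion, the statistical dimension of a subspace being its linear dimension). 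I would also note that, by rotation invariance of the i.i.d.\ Gaussian ensemble, $\ker(\bm{\bar Q})$ is a uniformly oriented subspace of $\mathbb{R}^{2N}$, and that $\mathcal{R}_G$ — hence its descent cone — is invariant under right multiplication by the orthogonal group $O(M)$; these are the ingredients that put $C$ and $K$ into the ``random relative position'' under which Theorem~\ref{thm:akf} applies, with constant $a_\eta=\sqrt{8\log(4/\eta)}$.

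The rest is bookkeeping. Writing $\delta:=\delta(\mathcal{D}(\mathcal{R}_G,\bm{\tilde\Theta}_0))$ and using $\delta(K)=2(N-L)M=2NM-2LM$, the first hypothesis of Theorem~\ref{thm:akf}, namely $\delta(C)+\delta(K)\le d-a_\eta\sqrt d$, becomes $\delta-2LM\le-a_\eta\sqrt{2NM}$, i.e.\ $2L\ge \delta/M+a_\eta\sqrt{2NM}/M$, and Theorem~\ref{thm:akf} then yields $\mathbb{P}\{C\cap\bm U K\neq\{\bm 0\}\}\le\eta$, i.e.\ $\mathbb{P}\{\mathscr{P}_a\text{ succeeds}\}\ge1-\eta$. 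Symmetrically, $\delta(C)+\delta(K)\ge d+a_\eta\sqrt d$ rearranges to $2L\le \delta/M-a_\eta\sqrt{2NM}/M$ and gives $\mathbb{P}\{\mathscr{P}_a\text{ succeeds}\}\le\eta$. These are exactly the two implications in the statement.

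The hard part will be the ``random relative position'' claim: ${\rm null}(\bm{\bar Q},M)$ has a tensor-product structure — it is the set of matrices whose columns all lie in one randomly oriented subspace of $\mathbb{R}^{2N}$ — and is therefore \emph{not} Haar-uniform in the Grassmannian of $\mathbb{R}^{2NM}$, so Theorem~\ref{thm:akf}, stated for a Haar-random rotation, does not apply verbatim. I would resolve this exactly as the paper does elsewhere: work with the approximated program $\mathscr{P}_{\textrm{approx}}$ (equivalently $\mathscr{P}_a$ in the noiseless case), for which — as argued earlier — the precise ensemble of the measurement operator has negligible effect on the location of the phase transition, and additionally exploit the invariance of $\mathcal{R}_G$ under right multiplication by $O(M)$ to reduce the relevant probability to one governed by the uniformly oriented subspace $\ker(\bm{\bar Q})\subseteq\mathbb{R}^{2N}$. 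Everything else — the characterization of the descent cone, the dimension count for the null space, and the final algebra — is routine.
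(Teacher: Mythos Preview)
Your proposal is correct and follows essentially the same route as the paper: invoke Fact~\ref{fact1} to rewrite success as a trivial-intersection event, compute $\delta({\rm null}(\bm{\bar Q},M))=2(N-L)M$, plug $C=\mathcal{D}(\mathcal{R}_G,\bm{\tilde\Theta}_0)$ and $K={\rm null}(\bm{\bar Q},M)$ into Theorem~\ref{thm:akf} with $d=2NM$, and rearrange. You are in fact more careful than the paper on one point: the paper simply applies Theorem~\ref{thm:akf} without commenting on the tensor-product structure of ${\rm null}(\bm{\bar Q},M)$, whereas you flag that this subspace is not Haar-uniform in the Grassmannian of $\mathbb{R}^{2NM}$ and propose to close the gap via the $O(M)$-invariance of $\mathcal{R}_G$ and the approximation heuristic already invoked for $\mathscr{P}_{\textrm{approx}}$.
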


The above theorem indicates that $\mathscr{P}_{a}$ indeed reveals a phase transition when the signature sequence lengths $L=\frac{\delta(\mathcal{D}(\mathcal{R}_G,\bm{\tilde\Theta}_0))}{2M}$. The transition from failure to success across a sharp range with width  $\mathcal{O}(\frac{\sqrt{NM}}{M})$. The phase transition location is
thus quite accurate. We will show that the size of the decent cone of $ \mathcal{R}_G $ at a point depends solely on  its sparsity level.  

There are  mainly two implications of Theorem \ref{thm:ptp3}. First, in the absence of noise, one can see that the proposed formulation  $ \mathscr{P}_{{a}} $ allows perfect signal $ \bm{\tilde{\Theta}}_0  $ recovery with exponentially high probability if and only if the number of signature sequence length $ L $ exceeds the range of phase transition. Second,  increasing the number of antennas $ M $ in BS will narrow the range of phase transition. In particular, the width of the transition region can be reduced to zero asymptotically as the number of antennas at the BS goes to infinity.
Therefore, massive MIMO is particularly suitable for supporting massive IoT connectivity by predicting accurate phase transition location.

The sharp phase transition results are thus able to guide the selection of the signature sequence length. We will further contribute this work by computing the statistical dimension of the descent cone  $\mathcal{D}(\mathcal{R}_G,\bm{\tilde\Theta}_0)$ for the group sparse inducing norm in Section {\ref{stadim}}. 

\subsubsection{Phase Transition in the Noisy Case}
\label{subsec:noise}
Let $ \bm{\tilde\Theta}^{\ast} $ be an estimate of the ground truth matrix $ \bm{\tilde{\Theta}}_0 $. 
To evaluate the accuracy of the estimator, we define the average squared prediction error as follows:
\begin{equation}\label{eq:ase}
        R(\bm{\tilde\Theta}^{\ast})=\frac{1}{2LM}\norm{\bm{\bar Q}\bm{\tilde\Theta}^{\ast}-\bm{\bar Q\tilde\Theta}_0}_F^2.
\end{equation}
We further define the estimation error of the estimator as $ \mathbb{E}_{\bm{\tilde{N}} }[R(\bm{\tilde\Theta}^{\ast})] $ for a given signature matrix $ \bm{\bar{Q}} $ and ground truth matrix $\bm{\tilde{\Theta}}_0 $.
We will see this quantity enjoys a phase transition as $L$ varies.

To facilitate efficient analysis in the noisy case, we consider the following formulation:
\begin{equation}
        \begin{split}
                \mathscr{P}_{b} :~
                &\underset{\bm{\tilde\Theta}\in\mathbb{R}^{2N\times M}}{\mini}
                \quad\norm{\bm{\bar Q}\bm{\tilde\Theta}-\bm{\tilde Y}}_F^2\\
                &{\subj}~ \mathcal{R}_G(\bm{\tilde\Theta})\le \mathcal{R}_G(\bm{\tilde\Theta}_0),
        \end{split}
\end{equation}
which is equivalent to problem $ \mathscr{P}_{\textrm{approx}} $ for some choice of the parameter $\epsilon$. It turns out that this problem also undergoes a phase transition when the length of the signature sequence is picked as $L=\frac{\delta(\mathcal{D}(\mathcal{R}_G,\bm{\tilde\Theta}_0))}{2M}$, which is coincident with the noiseless case \cite{oymak2016sharp}. We shall provide sharp phase transition results for robust group sparse estimation via program $\mathscr{P}_b$ in the following theorem.  

\begin{theorem}(Phase Transition of Problem $\mathscr{P}_{\textrm{b}}$):\label{thm:robust_recovery}
        Assume matrix $\bm{\bar{Q}}\in\mathbb{R}^{2L\times 2N}$ satisfies $\bm{\bar{Q}\bar{Q}^T=\bm I}$. Let the noise matrix $\bm{\tilde N}\thicksim\mathcal{N}(\bm 0,\sigma^2\bm I)$ be independent of $\bm{\bar{Q}}$ and $\bm Y=\bm{\bar Q}\bm{\tilde\Theta}_0+\bm{\tilde N}$
        with $\bm{\tilde\Theta}_0\in\mathbb{R}^{2N\times M}$. Let $\bm{\tilde\Theta}^{\ast}$ denote the optimal solution to problem $\mathscr{P}_b$. The prediction error $R(\bm{\tilde\Theta}^{\ast})$ and empirical error $ \hat R(\bm{\tilde\Theta}^{\ast}) $ is defined as $R(\bm{\tilde\Theta}^{\ast})=
        \frac{1}{2LM}\norm{\bm{\bar{Q}}\bm{\tilde\Theta}^{\ast}-
                \bm{\bar{Q}}\bm{\tilde\Theta}_0}_F^2$, $\hat R(\bm{\tilde\Theta}^{\ast})=
        \frac{1}{2LM}\norm{\bm{\bar{Q}}\bm{\tilde\Theta}^{\ast}-\bm Y}_F^2$, respectively. Set $\delta= \frac{\delta(\mathcal{D}(\mathcal{R}_G,\bm{\tilde\Theta}_0))}{2M}$. Then there exist constants $c_1,c_2>0$ such that
        \begin{itemize}
                \item Whenever $L<\delta$,
                \begin{align}
                        \underset{\sigma>0}{\max}~\frac{\mathbb{E}_{\bm{\tilde{N}}}[R(\bm{\tilde\Theta}^{\ast})]}{\sigma^2}&=1,\label{eq:noise1} \\
                        \underset{\sigma\rightarrow 0}{\lim}\frac{\mathbb{E}_{\bm{\tilde{N}}}[\hat{R}(\bm{\tilde{\Theta}}^\ast)]}{\sigma^2}&=0, \label{eq:noise11}
                \end{align}
                with probability $1-c_1\exp(-c_2(L-\delta)^2/(NM^3))$.
                \item Whenever $L>\delta$,
                \begin{align}
                        \left|\underset{\sigma>0}{\max}~\frac{\mathbb{E}_{\bm{\tilde{N}}}[R(\bm{\tilde\Theta}^{\ast})]}{\sigma^2}-\frac{\delta}{L}\right|&\leq\frac{t\sqrt{2NM}}{2LM},\label{eq:noise2}\\
                        \left|\underset{\sigma\rightarrow 0}{\lim}~\frac{\mathbb{E}_{\bm{\tilde{N}}}[\hat{R}(\bm{\tilde{\Theta}}^\ast)]}{\sigma^2}-\left(1-\frac{\delta}{L}\right)\right|&\leq\frac{t\sqrt{2NM}}{2LM},\label{eq:noise22}
                \end{align}
                with probability $1-c_1\exp(-c_2t^2)$.
        \end{itemize}
        Here, the probabilities are calculated over the random measurement matrix $\bm{\bar Q}$.
\end{theorem}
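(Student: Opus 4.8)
The plan is to reduce the vector-valued problem $\mathscr{P}_b$ over $\mathbb{R}^{2N\times M}$ to a single ``effective'' convex recovery problem whose geometry is controlled by $\delta(\mathcal{D}(\mathcal{R}_G,\bm{\tilde\Theta}_0))$, and then invoke the sharp error analysis for constrained least-squares with Gaussian-type measurement operators (the Escaping Through a Mesh / Gaussian width machinery underlying \cite{oymak2016sharp,amelunxen2014living}). First I would write the optimality deviation $\bm{\tilde W}=\bm{\tilde\Theta}^\ast-\bm{\tilde\Theta}_0$, observe that by the constraint $\mathcal{R}_G(\bm{\tilde\Theta}^\ast)\le\mathcal{R}_G(\bm{\tilde\Theta}_0)$ the error lives in the descent cone, $\bm{\tilde W}\in\mathcal{D}(\mathcal{R}_G,\bm{\tilde\Theta}_0)$, and that since $\bm{\tilde\Theta}^\ast$ minimizes $\norm{\bm{\bar Q}\bm{\tilde\Theta}-\bm{\tilde Y}}_F^2$ over that cone, the standard first-order argument gives the basic inequality $\norm{\bm{\bar Q}\bm{\tilde W}}_F^2\le 2\langle \bm{\bar Q}^T\bm{\tilde N},\bm{\tilde W}\rangle$. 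Because $\bm{\bar Q}\bm{\bar Q}^T=\bm I$ (a partial isometry / random row-orthonormal matrix), $\bm{\bar Q}^T\bm{\tilde N}$ restricted to the row space behaves like a Gaussian, and the prediction error is $R(\bm{\tilde\Theta}^\ast)=\tfrac{1}{2LM}\norm{\bm{\bar Q}\bm{\tilde W}}_F^2$, so everything reduces to estimating $\sup$ and $\inf$ of $\langle \bm{\bar Q}^T\bm{\tilde N},\bm Z\rangle/\norm{\bm{\bar Q}\bm Z}_F$ over $\bm Z\in\mathcal{D}(\mathcal{R}_G,\bm{\tilde\Theta}_0)$, which is exactly a ratio of a Gaussian process to a norm governed by how the cone meets $\mathrm{null}(\bm{\bar Q},M)$.

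Next I would carry out the two regimes separately. For $L<\delta$: by Theorem~\ref{thm:ptp3} (more precisely the statistical-dimension comparison in Theorem~\ref{thm:akf}) the descent cone intersects the null space of $\bm{\bar Q}$ non-trivially with overwhelming probability, so there is a direction $\bm Z_0\in\mathcal{D}(\mathcal{R}_G,\bm{\tilde\Theta}_0)$ with $\bm{\bar Q}\bm Z_0=\bm 0$; moving along $\bm Z_0$ leaves the residual unchanged but cancels the part of the noise projected onto that direction, which forces $\mathbb{E}_{\bm{\tilde N}}[\hat R(\bm{\tilde\Theta}^\ast)]/\sigma^2\to 0$, while the worst-case prediction error saturates at the trivial bound $\mathbb{E}_{\bm{\tilde N}}[R(\bm{\tilde\Theta}^\ast)]/\sigma^2= 1$ (upper bound from the basic inequality plus $\bm{\bar Q}\bm{\bar Q}^T=\bm I$; lower bound from choosing $\sigma$ large so the signal is negligible relative to noise and the estimator essentially interpolates). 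The concentration of the event ``$\mathcal{D}\cap\mathrm{null}(\bm{\bar Q},M)\neq\{\bm 0\}$'' has the stated probability $1-c_1\exp(-c_2(L-\delta)^2/(NM^3))$ by the deviation bound in the kinematic formula, after accounting for the $M$-fold vectorization (the ambient dimension is $2NM$ and the Gaussian-width fluctuation scales like $\sqrt{NM}$, which after dividing by the appropriate powers of $M$ and squaring gives the $NM^3$ denominator).

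For $L>\delta$: here the cone misses the null space, and the sharp constant $\delta/L$ appears as the ratio of the statistical dimension of the (restricted) descent cone to the dimension of the range of $\bm{\bar Q}$. I would use the Gaussian min-max / convex Gaussian min-max theorem (as in \cite{oymak2016sharp}) to show $\mathbb{E}_{\bm{\tilde N}}[R(\bm{\tilde\Theta}^\ast)]/\sigma^2$ concentrates around $\delta(\mathcal{D})/(2LM)=\delta/L$, with fluctuation of order $\sqrt{2NM}/(2LM)$ controlled by a parameter $t$, and the complementary identity $\hat R = R - (\text{cross term}) + \text{const}$ combined with $\mathbb{E}_{\bm{\tilde N}}[\norm{\bm{\bar Q}^T\bm{\tilde N}\text{ on range}}^2]=2LM\sigma^2$ yields the $(1-\delta/L)$ limit for the empirical error. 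The probability $1-c_1\exp(-c_2 t^2)$ is then just sub-Gaussian concentration of the Gaussian process around its mean. I expect the main obstacle to be the matrix (vector-valued) version of the width computations: one must carefully track how the group-sparsity descent cone in $\mathbb{R}^{2N\times M}$ interacts with the block structure of $\bm{\bar Q}\otimes \bm I_M$, justify the reduction of $\mathbb{E}_{\bm{\tilde N}}[R(\bm{\tilde\Theta}^\ast)]$ to a scalar statistical-dimension quantity uniformly in $\sigma$, and get the exponent $NM^3$ exactly right rather than merely up to constants — this bookkeeping, plus verifying that the Gaussian-process approximation is valid for the row-orthonormal (rather than i.i.d. Gaussian) $\bm{\bar Q}$, is where the real work lies.
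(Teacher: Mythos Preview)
Your overall strategy (reduce to the geometry of $\mathcal{D}(\mathcal{R}_G,\bm{\tilde\Theta}_0)$, then invoke concentration) is in the right spirit, but the paper's proof is organized very differently and two of your concrete steps in the $L<\delta$ regime do not work as stated.

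\textbf{What the paper actually does.} The paper does not argue through the basic inequality or CGMT from scratch. It invokes an exact identity from \cite[Lemma~7.1]{oymak2016sharp}:
\[
2LM\cdot\max_{\sigma>0}\frac{\mathbb{E}_{\bm{\tilde N}}[R(\bm{\tilde\Theta}^{\ast})]}{\sigma^2}
\;=\;2LM-\delta\bigl(\bm{\bar Q}\,\mathcal{C}(\bm{\tilde\Theta}_0,\bm{\bar Q}^{T})\bigr),
\qquad
\mathcal{C}(\bm{\tilde\Theta}_0,\bm{\bar Q}^{T})=\mathcal{D}(\mathcal{R}_G,\bm{\tilde\Theta}_0)^{\circ}\cap\mathrm{Range}(\bm{\bar Q}^{T}).
\]
Note that the relevant cone is the \emph{polar} of the descent cone intersected with $\mathrm{Range}(\bm{\bar Q}^{T})$, not the descent cone intersected with $\mathrm{null}(\bm{\bar Q})$. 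With this identity in hand, the $L<\delta$ case follows from Theorem~\ref{thm:akf} (the polar cone misses the range w.h.p., so the right-hand side is $2LM$), and the $L>\delta$ case follows from \cite[Proposition~13.1]{oymak2016sharp} plus rotational invariance of $\delta(\cdot)$. The empirical-error statements \eqref{eq:noise11}, \eqref{eq:noise22} are obtained from the single relation $\max_{\sigma}\mathbb{E}[R]/\sigma^2+\lim_{\sigma\to0}\mathbb{E}[\hat R]/\sigma^2=1$ \cite[Section~7.3]{oymak2016sharp}. All of the delicate Gaussian-process work you anticipate is already packaged inside these cited results.

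\textbf{Specific gaps in your $L<\delta$ argument.} First, your lower-bound reasoning (``choose $\sigma$ large so the signal is negligible'') goes the wrong way: the constraint set $\{\mathcal{R}_G(\bm{\tilde\Theta})\le\mathcal{R}_G(\bm{\tilde\Theta}_0)\}$ is bounded, so $R(\bm{\tilde\Theta}^{\ast})$ stays bounded and $\mathbb{E}[R]/\sigma^2\to 0$ as $\sigma\to\infty$. The paper (and \cite{oymak2016sharp}) explicitly notes the worst case is attained as $\sigma\to 0$, where the tangent-cone approximation becomes exact. Second, your mechanism for $\hat R\to 0$ is inverted: if $\bm{\bar Q}\bm Z_0=\bm 0$, moving along $\bm Z_0$ leaves the residual $\bm{\bar Q}\bm{\tilde\Theta}-\bm{\tilde Y}$ \emph{identically unchanged}, so it cannot ``cancel the part of the noise projected onto that direction.'' The correct statement is the dual one: when $L<\delta$, with high probability $\bm{\bar Q}\mathcal{D}$ is all of $\mathbb{R}^{2L\times M}$ (equivalently $\mathcal{D}^{\circ}\cap\mathrm{Range}(\bm{\bar Q}^{T})=\{\bm 0\}$), so locally $\bm{\bar Q}K$ contains a full neighborhood of $\bm{\bar Q}\bm{\tilde\Theta}_0$ and the projection of $\bm{\tilde Y}$ onto $\bm{\bar Q}K$ equals $\bm{\tilde Y}$ itself as $\sigma\to 0$, giving $\hat R=0$.

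Your $L>\delta$ plan via CGMT is plausible and would essentially reprove the cited results of \cite{oymak2016sharp}; that is a legitimate alternative route, but be aware that the exact constant $\delta/L$ (rather than an $O(\delta/L)$ bound) and the precise fluctuation $t\sqrt{2NM}/(2LM)$ require the sharp two-sided control that the polar-cone identity delivers cleanly, whereas the basic inequality $\|\bm{\bar Q}\bm{\tilde W}\|_F^2\le 2\langle\bm{\tilde N},\bm{\bar Q}\bm{\tilde W}\rangle$ alone only gives one-sided bounds with a loose constant.
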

\begin{proof}
        Please refer to Appendix \ref{app:thm_robust_recovery} for details.
\end{proof}

This theorem describes a phase transition at location $\delta$ in the noisy case, which extends the results in the noiseless case. When the  signature sequence length $L$ is smaller than $\delta$, the worst-case  estimation error $ \mathbb{E}_{\bm{\tilde{N}} }[R(\bm{\tilde\Theta}^{\ast})] $ is simply the noise power $\sigma^2$, and increasing $L$ cannot decrease the estimation error. This means that the regularized linear regression problem is sensitive to noise. After crossing the phase transition, increasing the signature length can reduce the worst-case estimation error at the rate $1/L$. The worst-case estimation error is achieved when $\sigma\rightarrow 0$ \cite{oymak2016sharp}. It will be verified in section \ref{sec:sim} that the obtained phase transition results accurately depict the phase transition behavior of the original problem $\mathscr{P}$. One  observation in Theorem \ref{thm:robust_recovery} is that the behavior of empirical estimation error $ \hat R(\bm{\tilde\Theta}^{\ast}) $ provides guidance for choosing parameter $ \epsilon $ in problem $ \mathscr{P}_{\textrm{approx}} $.  Using the worst case empirical estimation error, we can set
\begin{equation}\label{eq:eps}
\epsilon=\sigma\sqrt{2LM-\delta(\mathcal{D}(\tilde{\mathcal{R}}_G,\bm{\tilde\Theta}_0))},
\end{equation}
provided a reasonable estimate of noise power $ \sigma^2 $.    

\subsection{Computing the Statistical Dimension}
\label{stadim}
Theorem \ref{thm:ptp3} and Theorem \ref{thm:robust_recovery} allow us to sharply locate the phase transitions for $\mathscr{P}_{a}$ and $\mathscr{P}_{b}$, respectively, and computing the statistical dimension of the descent cone is the key to evaluate  the theoretical results. But this presents its own challenges to provide  a computationally feasible formula for the statistical dimension. We thus provide an accurate estimate and insightful expression for $\delta(\mathcal{D}(\mathcal{R}_G,\bm{\tilde\Theta}_0))$ using the following  recipe  suggested in \cite{amelunxen2014living}.
\begin{lemma}(The Statistical Dimension of a Descent Cone):
\label{lemmasta}
        Let $\mathcal{R}:\mathbb{R}^d\rightarrow\mathbb{R}\cup\{\pm\infty\}$ be a proper convex function and $\bm x\in\mathbb{R}^d$. Assume that the sub-differential $\partial\mathcal{R}(\bm x)$ is non-empty, compact, and does not contain the origin. Then
        \begin{equation}\label{eq:stdim_upbnd}
                \delta(\mathcal{D}(\mathcal{R},\bm x))\leq \underset{\tau\geq 0}{\rm inf}~ \mathbb{E}[{\rm dist}^2(\bm g, \tau\cdot\partial\mathcal{R}(\bm x))],
        \end{equation}
        where $\bm g\in\mathbb{R}^n$ is a standard normal vector.
\end{lemma}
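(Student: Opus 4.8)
The plan is to follow the \emph{polarity} route of \cite{amelunxen2014living}: rather than estimating $\delta(\mathcal{D}(\mathcal{R},\bm x))$ directly, I would pass to the polar cone, because the polar of a descent cone has an explicit description in terms of the subdifferential. Under the stated hypotheses ($\partial\mathcal{R}(\bm x)$ nonempty, compact, $\bm 0\notin\partial\mathcal{R}(\bm x)$), the point $\bm x$ is not a minimizer of $\mathcal{R}$, so $\mathcal{D}(\mathcal{R},\bm x)$ is the tangent cone of the sublevel set $\{\bm y:\mathcal{R}(\bm y)\le\mathcal{R}(\bm x)\}$ at $\bm x$, and its polar is the associated normal cone, which equals the conic hull of the subdifferential:
\begin{equation*}
\mathcal{D}(\mathcal{R},\bm x)^{\circ}=\mathrm{cone}(\partial\mathcal{R}(\bm x))=\bigcup_{\tau\ge 0}\tau\cdot\partial\mathcal{R}(\bm x),
\end{equation*}
where compactness of $\partial\mathcal{R}(\bm x)$ together with $\bm 0\notin\partial\mathcal{R}(\bm x)$ ensures this conic hull is already closed, so no closure is needed.

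Next I would invoke the projection identity that ties the statistical dimension to the polar cone. By Moreau's decomposition, for a standard Gaussian vector $\bm g\in\mathbb{R}^d$ and any closed convex cone $C$ one has $\bm g=\bm\Pi_C(\bm g)+\bm\Pi_{C^{\circ}}(\bm g)$ with the two summands orthogonal, hence $\bm\Pi_C(\bm g)=\bm g-\bm\Pi_{C^{\circ}}(\bm g)$ and $\norm{\bm\Pi_C(\bm g)}_2=\mathrm{dist}(\bm g,C^{\circ})$. Taking squared norms and expectations yields
\begin{equation*}
\delta(C)=\mathbb{E}\big[\norm{\bm\Pi_C(\bm g)}_2^2\big]=\mathbb{E}\big[\mathrm{dist}^2(\bm g,C^{\circ})\big].
\end{equation*}
Applying this with $C=\mathcal{D}(\mathcal{R},\bm x)$ and the description of $C^{\circ}$ above gives $\delta(\mathcal{D}(\mathcal{R},\bm x))=\mathbb{E}[\mathrm{dist}^2(\bm g,\bigcup_{\tau\ge 0}\tau\,\partial\mathcal{R}(\bm x))]$. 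Finally I would bound the distance to the union by the distance to a single scaled subdifferential: for every fixed $\tau\ge 0$ the inclusion $\tau\,\partial\mathcal{R}(\bm x)\subseteq\bigcup_{\tau'\ge 0}\tau'\partial\mathcal{R}(\bm x)$ forces $\mathrm{dist}(\bm g,\bigcup_{\tau'\ge 0}\tau'\partial\mathcal{R}(\bm x))\le\mathrm{dist}(\bm g,\tau\,\partial\mathcal{R}(\bm x))$ pointwise; squaring, taking expectations, and minimizing over $\tau\ge 0$ gives
\begin{equation*}
\delta(\mathcal{D}(\mathcal{R},\bm x))\le\inf_{\tau\ge 0}\mathbb{E}\big[\mathrm{dist}^2(\bm g,\tau\cdot\partial\mathcal{R}(\bm x))\big],
\end{equation*}
which is the claimed bound.

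The main obstacle is the first step: establishing the polarity identity $\mathcal{D}(\mathcal{R},\bm x)^{\circ}=\mathrm{cone}(\partial\mathcal{R}(\bm x))$ together with closedness of that conic hull. This is where all three hypotheses are used — nonemptiness of $\partial\mathcal{R}(\bm x)$ guarantees the sublevel set has a nonempty normal cone at $\bm x$; $\bm 0\notin\partial\mathcal{R}(\bm x)$ rules out $\bm x$ being a minimizer (otherwise $\mathcal{D}(\mathcal{R},\bm x)=\mathbb{R}^d$ and the bound is vacuous); and compactness of $\partial\mathcal{R}(\bm x)$ is precisely what makes $\mathrm{cone}(\partial\mathcal{R}(\bm x))$ closed, hence a genuine closed convex cone with a well-behaved polar. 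Everything afterward — the Moreau identity and the union-distance estimate — is routine. Since this is exactly the recipe of \cite{amelunxen2014living}, I would cite it for the polarity step rather than reprove it, and reserve the paper's own effort for evaluating the right-hand infimum for the group-sparsity regularizer $\mathcal{R}_G$ in Section \ref{stadim}.
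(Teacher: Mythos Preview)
Your proposal is correct and matches the argument in \cite{amelunxen2014living}; the paper itself does not supply a proof of this lemma but simply quotes it as a known recipe from that reference, reserving its own work for evaluating the right-hand side in Propositions~\ref{pro:1} and~\ref{pro:2}. So you have in fact gone further than the paper by sketching the underlying polarity/Moreau argument rather than merely citing it.
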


Although Lemma {\ref{lemmasta}} suggested a general method to study the statistical dimension of a descent cone, it still needs additional technical effort to compute accurate estimate for the statistical dimension of a descent cone for the group sparsity inducing norm adopted in this paper. 

\begin{proposition}(Statistical Dimension  for $\mathcal{R}_G$):\label{pro:1}
        Let $\bm\Theta_0\in\mathbb{C}^{N\times M}$ be with $S$ nonzero rows, and define the
        normalized sparsity $\rho:=S/N$. The upper bound of  statistical dimension
        of descent cone of $\mathcal{R}_G$ at $\bm{\tilde\Theta_0}=[(\Re \{\bm\Theta_0\})^T,(\Im \{\bm\Theta_0)^T\}]^T\in\mathbb{R}^{2N\times M}$
        is given by
        \begin{equation}\label{eq:stdim_Rg}
                \begin{split}
                        &\frac{\delta(\mathcal{D}(\mathcal{R}_G;\bm{\tilde\Theta_0}))}{N}\leq
                        \underset{\tau\geq 0}{\inf}\Big\{\rho(2M+\tau^2)\\
                        &+(1-\rho)\frac{2^{1-M}}
                        {\Gamma(M)}\int_{\tau}^{\infty}    (u-\tau)^2u^{2M-1}e^{-\frac{u^2}{2}}\mathrm{d}u\Big\}.
                \end{split}
        \end{equation}
        The unique optimum $\tau^\star$ which minimizes the right-hand side
        of \eqref{eq:stdim_Rg} is the solution of
        \begin{equation}
                \frac{2^{1-M}}{\Gamma(M)}\int_{\tau}^{\infty}(\frac{u}{\tau}-1)u^{2M-1}e^{-\frac{u^2}{2}}\mathrm{d}u
                =\frac{\rho}{1-\rho}.
        \end{equation}
\end{proposition}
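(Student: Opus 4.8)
\section*{Proof proposal for Proposition~\ref{pro:1}}

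The plan is to apply Lemma~\ref{lemmasta} with $\mathcal{R}=\mathcal{R}_G$, ambient dimension $d=2NM$, and base point $\bm x=\bm{\tilde\Theta}_0$, so that the task reduces to evaluating the one–parameter optimization $\inf_{\tau\geq0}\mathbb{E}\,\mathrm{dist}^2(\bm g,\tau\cdot\partial\mathcal{R}_G(\bm{\tilde\Theta}_0))$ for a Gaussian matrix $\bm g\in\mathbb{R}^{2N\times M}$ with i.i.d.\ $\mathcal{N}(0,1)$ entries. First I would check the hypotheses of the lemma: $\mathcal{R}_G$ is a norm on $\mathbb{R}^{2N\times M}$, hence its subdifferential at any point is non-empty and compact, and since $\bm{\tilde\Theta}_0\neq\bm 0$ (it has $S\geq1$ nonzero rows) the subdifferential does not contain the origin. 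I would also record that the grouping $\mathcal{V}_i=\{i,i+N\}$ in the real lifting $\bm{\tilde\Theta}_0=[(\Re\bm\Theta_0)^T,(\Im\bm\Theta_0)^T]^T$ is such that the $i$-th group $\bm{\tilde\Theta}_{0,\mathcal{V}_i}\in\mathbb{R}^{2\times M}$ vanishes exactly when the $i$-th row $\bm\theta_0^i$ of $\bm\Theta_0$ vanishes; thus there are precisely $S$ ``active'' groups and $N-S$ ``inactive'' groups.

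Next I would compute $\partial\mathcal{R}_G(\bm{\tilde\Theta}_0)$ explicitly. Because $\mathcal{R}_G$ is a separable sum over the groups, its subdifferential is the direct product over $i=1,\dots,N$ of the subdifferentials of $\bm Z\mapsto\norm{\bm Z}_F$ at $\bm{\tilde\Theta}_{0,\mathcal{V}_i}$: for an active group this is the single matrix $\bm{\tilde\Theta}_{0,\mathcal{V}_i}/\norm{\bm{\tilde\Theta}_{0,\mathcal{V}_i}}_F$, and for an inactive group it is the unit Frobenius ball in $\mathbb{R}^{2\times M}$. Since the squared Frobenius distance to a product set equals the sum of the squared distances to the factors,
\[
\mathbb{E}\,\mathrm{dist}^2\!\big(\bm g,\tau\cdot\partial\mathcal{R}_G(\bm{\tilde\Theta}_0)\big)
=\sum_{i:\,\bm\theta_0^i\neq\bm 0}\mathbb{E}\Big\|\bm g_{\mathcal{V}_i}-\tau\,\frac{\bm{\tilde\Theta}_{0,\mathcal{V}_i}}{\norm{\bm{\tilde\Theta}_{0,\mathcal{V}_i}}_F}\Big\|_F^2
+\sum_{i:\,\bm\theta_0^i=\bm 0}\mathbb{E}\,\big(\norm{\bm g_{\mathcal{V}_i}}_F-\tau\big)_+^2 .
\]
Each active term equals $2M+\tau^2$ because $\bm g_{\mathcal{V}_i}$ has $2M$ i.i.d.\ standard normal entries and the deterministic shift has Frobenius norm $\tau$. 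For each inactive term, $\norm{\bm g_{\mathcal{V}_i}}_F$ follows the $\chi$ distribution with $2M$ degrees of freedom, whose density is $\frac{2^{1-M}}{\Gamma(M)}u^{2M-1}e^{-u^2/2}$ on $u\geq0$, so the term equals $\frac{2^{1-M}}{\Gamma(M)}\int_\tau^\infty(u-\tau)^2u^{2M-1}e^{-u^2/2}\,\mathrm{d}u$. Adding the $S$ active and $N-S$ inactive contributions, dividing by $N$, and substituting $\rho=S/N$ yields the claimed bound \eqref{eq:stdim_Rg}.

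For the characterization of $\tau^\star$, I would work with the objective $J(\tau)$ appearing inside the infimum and show it is smooth and strictly convex on $[0,\infty)$, so its unique minimizer is the stationary point. Differentiating under the integral sign — the boundary term at $u=\tau$ vanishes because the integrand carries the factor $(u-\tau)^2$ — gives $J'(\tau)=2\rho\tau-2(1-\rho)\frac{2^{1-M}}{\Gamma(M)}\int_\tau^\infty(u-\tau)u^{2M-1}e^{-u^2/2}\,\mathrm{d}u$, and setting $J'(\tau)=0$ and dividing by $2\tau>0$ produces exactly the displayed stationarity equation. A second differentiation gives $J''(\tau)=2\rho+2(1-\rho)\frac{2^{1-M}}{\Gamma(M)}\int_\tau^\infty u^{2M-1}e^{-u^2/2}\,\mathrm{d}u>0$ for $\rho\in(0,1)$, establishing strict convexity; combined with $J'(0)<0$ and $J'(\tau)\to\infty$ as $\tau\to\infty$, this also shows the minimizer is interior, so the stationarity equation has a unique root, which is $\tau^\star$.

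The main obstacle, and essentially the only place care is needed, is the subdifferential bookkeeping in the second step: one must verify that the real lifting sends the row-sparsity pattern of $\bm\Theta_0\in\mathbb{C}^{N\times M}$ exactly onto the $2M$-dimensional group pattern $\{\mathcal{V}_i\}$ of $\bm{\tilde\Theta}_0$ (so the active-group count is $S$, not $2S$), and that the subdifferential of the separable regularizer factorizes and the Frobenius distance splits over the factors accordingly. Once this structure is pinned down, the remaining work — recognizing the $\chi_{2M}$ density and carrying out a single-variable convexity/stationarity argument — is routine.
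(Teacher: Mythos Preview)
Your proposal is correct and follows essentially the same route as the paper: apply Lemma~\ref{lemmasta}, compute the group-separable subdifferential of $\mathcal{R}_G$, split the expected squared distance into $S$ active contributions $(2M+\tau^2)$ and $N-S$ inactive contributions expressed via the $\chi_{2M}$ density, and then identify $\tau^\star$ by a first-order condition. The only difference is that the paper defers the uniqueness/convexity argument for $\tau^\star$ to \cite[Proposition~4.5]{amelunxen2014living}, whereas you carry out the differentiation and strict-convexity check explicitly; your version is slightly more self-contained but otherwise identical in substance.
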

\begin{proof}
        Please refer to Appendix \ref{app:proof_pro1} for details.
\end{proof}

The bound provided in Proposition \ref{pro:1} can be numerically computed efficiently, and thus can be utilized in Theorem \ref{thm:ptp3} and Theorem \ref{thm:robust_recovery} to compute the locations of phase transitions. Note that the bound only depends  on the sparsity level of matrix $\bm{\tilde{\Theta}}_0 $ and turns out to be accurate via extensive experiments.

\section{Sharp Computation and Estimation Trade-offs via Smoothing Method}
\label{sec:tradeoff}
In an IoT network with a  massive number of devices, it becomes  critical to solve the JADE problem under a fixed time budget. To address the computational challenges for solving the high-dimensional group sparsity estimation problem, we adopt the smoothing method to smooth the non-differentiable group sparsity inducing regularizer to accelerate the convergence rates. We further characterize the sharp trade-off between the computational cost and estimation accuracy. This provides guidelines on choosing the optimal signature sequences to maintain the estimation accuracy for the smoothed group sparsity estimator.   

\subsection{Accelerating Convergence Rate via Smoothing}
Adding a smooth function to ``smooth" the non-differentiable objective function is a well-known idea in the context of sparse optimization, which makes the regularized problem easy to solve \cite{lai2013augmented, bruer2015designing}.  In particular, for problem $\mathscr{P}$, we augment $\mathcal{R}(\bm{\Theta})$ by adding a smoothing function $\frac{\mu}{2}\norm{\bm \Theta}_F^2$, where $\mu$ is a positive scalar and called as the smoothing parameter. Problem $\mathscr{P}$ is thus smoothed as\begin{equation}
\begin{split}
\mathscr{P}_{s}:\quad
&\underset{\bm{\Theta}\in\mathbb{C}^{N\times M}}{\mini}
\quad\tilde{\mathcal{R}}(\bm{\Theta}):=\mathcal{R}(\bm{\Theta})+ \frac{\mu}{2}\norm{\bm
\Theta}_F^2\\
&{\subj}\quad \norm{\bm{ Q\Theta}-\bm{ Y}}_F\leq \epsilon,
\end{split}
\end{equation}
which can be rewritten in the real domain as follows,
 
\begin{equation}
        \begin{split}
                \mathscr{P}_{\tilde{r}}:\quad
                &\underset{\bm{\tilde\Theta}\in\mathbb{R}^{2N\times M}}{\mini}
                \quad\tilde{\mathcal{R}}_G(\bm{\tilde\Theta})\\
                &{\subj}\quad \norm{\bm{\tilde Q\tilde\Theta}-\bm{\tilde
Y}}_F\leq \epsilon,
        \end{split}
\end{equation}
where $\bm{\tilde\Theta}$, $\bm{\tilde
Q}$, and $\bm{\tilde
Y}$ are given in problem $\mathscr{P}_{\textrm{approx}}$ (\ref{appreal}).

As problem $ \mathscr{P}_{\tilde{r}} $ is not differentiable, applying the subgradient method to solve it  would yield a slow coverage rate. Fortunately, the dual formulation of problem $ \mathscr{P}_{\tilde{r}} $ leverages the benefits from smoothing techniques, as the smoothed dual problem can be reduced to an unconstrained problem with the composite objective function consisting of a convex, smooth function and a convex, nonsmooth function. This composite form  can be solved by a rich set of  first-order methods  such as Auslender and Teboulle’s algorithm  \cite{auslender2006interior}, Nesterov’s 2007 algorithm (N07) \cite{nesterov2013gradient} and Lan, Lu, and Monteiro’s modification of N07 (LLM) algorithm \cite{lan2011primal} etc., and these algorithms have the $ O(1/\sqrt{\gamma}) $ ($ \gamma $ is the numerical accuracy) convergence rate \cite{nemirovskii1983problem, becker2011templates}. 

The dual problem of $\mathscr{P}_{\tilde{r}}$ is given by 
\begin{equation*}
        \begin{split}
                &\underset{\bm{Z},t }{\maxi}
                \quad\mathcal{D}(\bm Z,t):=\underset{\bm{\tilde\Theta}}{\inf}\left\{  \tilde{\mathcal{R}}(\bm{\tilde\Theta})-\langle\bm Z, \bm{\tilde Q\tilde\Theta}-\bm{\tilde Y}\rangle-t\epsilon  \right\}\\
                &{\subj}\quad \norm{\bm Z}_F\leq t,
        \end{split}
\end{equation*}
where $\bm {Z}\in\mathbb{R}^{2N\times M}$ and $t>0$. 
Since $\epsilon\ge0$,  eliminating the dual variable $t$, we obtain the unconstrained
problem as follows
\begin{equation*}
        \underset{\bm Z\in\mathbb{R}^{2N\times M}}{\mini}
        ~\mathcal{D}(\bm Z):=-\underset{\bm{\tilde\Theta}}{\inf}\left\{  \tilde{\mathcal{R}}(\bm{\tilde\Theta})-\langle\bm Z, \bm{\tilde Q\tilde\Theta}-\bm{\tilde Y}\rangle-\epsilon\norm{\bm Z}_F  \right\}.
\end{equation*}
The dual objective function can be further represented as the following composite function
\begin{align}\label{eq:comp}
        \mathcal{D}(\bm Z)=\underbrace{-\underset{\bm{\tilde\Theta}}
                {\inf}\left\{\tilde{\mathcal{R}}(\bm{\tilde\Theta})-\langle\bm Z, \bm{\tilde Q\tilde\Theta}\rangle\right\}-\langle\bm Z, \bm{\tilde Y}\rangle}_{\tilde{\mathcal{D}}(\bm Z)}
        \underbrace{\underset{}{+\epsilon\norm{\bm Z}_F}}_{\mathcal{H}(\bm Z)}.
\end{align}
Function $\tilde{\mathcal{D}}(\bm Z)$ is differentiable and its gradient is
\begin{equation*}
        \nabla \tilde{\mathcal{D}}(\bm Z) = -\tilde{\bm Y}+\tilde{\bm Q}\tilde{\bm\Theta}_{\bm Z},
\end{equation*}
where
\begin{equation}\label{eq:theta_Z}
        \tilde{\bm\Theta}_{\bm Z}:=\underset{\bm{\tilde \Theta}}{\arg\min}\left\{\tilde{\mathcal{R}}(\bm{\tilde\Theta})-\langle\bm Z, \bm{\tilde Q\tilde\Theta}\rangle\right\}.
\end{equation}
Furthermore, $\nabla \tilde{\mathcal{D}}(\bm Z)$ is a Lipschitz continuous with Lipschitz constant upper bounded by $L_s := \mu^{-1}\norm{\bm{\tilde{Q}}}_2^2$. That is to say, the dual objective is a composition of the smooth function $\tilde{\mathcal{D}}(\bm Z)$ and the nonsmooth function $\mathcal{H}(\bm Z)$. This composite form \eqref{eq:comp} can be solved by a rich set of first-order methods \cite{becker2011templates}, which are particularly sensitive to the smoothing parameter $ \mu $, i.e., a larger value of the smoothing parameter $ \mu $  leads to a faster convergence rate.
 
In particular, we present the  Lan, Lu, and Monteiro’'s algorithm \cite{lan2011primal}  in Algorithm \ref{algo} as a typical example to show the benefits of smoothing. 

\begin{algorithm}[h]
        \SetKwData{Left}{left}\SetKwData{This}{this}\SetKwData{Up}{up}
        \SetKwInOut{Input}{Input}\SetKwInOut{Output}{output}
        \Input{Signature matrix $\bm{\tilde{Q}}\in\mathbb{R}^{2L\times2N}$, observed matrix $\bm{\tilde{Y}}\in\mathbb{R}^{2L\times M}$, and parameter $\epsilon$.}
        
        $\bm Z_0\leftarrow \bm 0$, $\bm{\bar{Z}}_0\leftarrow \bm Z_0$, $t_0\leftarrow 1$\\
        
        \For{$k=0,1,2,\cdots$}{
                $\bm B_k\leftarrow (1-t_k)\bm Z_k+t_k\bm{\bar Z}_k$ \\
                
                $\bm{\tilde{\Theta}}_k\leftarrow\mu^{-1}{\rm SoftThreshold}(\bm{\tilde{Q}}^{\rm T}\bm B_k,1) $\nllabel{alg:line0}\\
                
                $\bm{\bar{Z}}_{k+1}\leftarrow{\rm Shrink}(\bm{\bar{Z}}_k-(\bm{\tilde{Q}}\bm{\tilde{\Theta}}_k-\bm{\tilde{Y}})/L_s/t_k,\epsilon/L_s/t_k)$\nllabel{alg:line1}\\
                
                $\bm Z_{k+1}\leftarrow{\rm Shrink}(\bm B_k-(\bm{\tilde{Q}}\bm{\tilde{\Theta}}_k-\bm{\tilde{Y}})/L_s,\epsilon/t_k)$ \nllabel{alg:line2}\\
                
                $t_{k+1}\leftarrow 2/(1+(1+4/t_k^2)^{1/2})$
        }
        \caption{Lan, Lu, and Monteiro's Algorithm}\label{algo}
\end{algorithm}\DecMargin{1em}

In Algorithm \ref{algo}, lines \ref{alg:line0} is the solution to \eqref{eq:theta_Z}, line \ref{alg:line1} and \ref{alg:line2} are the solutions to the following composite gradient mapping  respectively,
\begin{equation*}\small
        \begin{split}
                \bm{\bar{Z}}_{k+1} &\leftarrow \underset{\bm Z\in\mathbb{R}^{2N\times M}}{\arg\min}\left\{\langle \nabla \tilde{\mathcal{D}}(\bm Z), \bm Z\rangle  +\frac{1}{2}t_kL_s\norm{\bm Z-\bm{\bar{Z}}_k}_F+\mathcal{H}(\bm Z)\right\}, \\
                \bm{Z}_{k+1} &\leftarrow \underset{\bm Z\in\mathbb{R}^{2N\times M}}{\arg\min}\left\{\langle \nabla \tilde{\mathcal{D}}(\bm Z), \bm Z\rangle  +\frac{1}{2}L_s\norm{\bm Z-\bm{B}_k}_F+\mathcal{H}(\bm Z)\right\}.
        \end{split}
\end{equation*}
The operator  ${\rm Shrink}$ is given by
\begin{equation*}
        {\rm Shrink}(\bm Z,t) = \max\left\{1-\frac{t}{\norm{\bm Z}_F},0 \right\} \bm Z.
\end{equation*}
Let $\bm X={\rm SoftThreshold}(\bm Z,t)\in\mathbb{R}^{N\times M}$. Each row of $\bm X$ is given by
\begin{equation*}
        \bm x^i={\rm Shrink}(\bm z^i,t),\quad\text{for}~ i=1,\cdots N.
\end{equation*}

Let $ \bm Z^\ast $ be  an optimal point for \eqref{eq:comp}, then the convergence behavior of Algorithm \ref{algo} satisfies \cite{becker2011templates},
\begin{equation}\label{key}
        \mathcal{D}(\bm Z_{k+1})-\mathcal{D}(\bm Z^\ast)\le \frac{2 \norm{\bm{\tilde{Q}}}_2^2 \|\bm Z_0-\bm Z^\ast\|_F^2 }{\mu k^2}.
\end{equation}
Therefore, the number of iterations required to reach $ \gamma $ accuracy is at most $\left\lceil \sqrt{\frac{2\norm{\bm{\tilde{Q}}}_2^2}{\mu\gamma }}\|\bm Z_0-\bm Z^\ast\|_F \right\rceil$, which implies that a larger $ \mu $ will result in a faster convergence rate. For each iteration in Algorithm \ref{algo}, the operators $ \operatorname{SoftThreshold} $ and $ \operatorname{Shrink} $ are computationally cheap, and the dominate cost is the matrix-matrix products involving the signature matrix $ \bm{\tilde{Q}} $, which is $ \mathcal{O}(LNM) $. 

In practice, we terminate the algorithm   when the relative primal feasibility gap satisfies $ |\norm{\bm{\tilde Q\tilde\Theta}_k-\bm{\tilde Y}}_F- \epsilon|/\epsilon\le\gamma_0$ for a small enough $ \gamma_0 $. 
The bound of the feasibility gap of primal iterates $ \bm{\tilde\Theta}_k  $ at each iteration $ k $ is given as follows \cite{bruer2015designing},
\begin{equation}\label{key}
        \left|\norm{\bm{\tilde Q\tilde\Theta_k}-\bm{\tilde Y}}_F-
        \epsilon\right|\le\frac{2\norm{\bm{\tilde{Q}}}_2^2\norm{\bm Z^{\ast}}_F }{\mu k}.
\end{equation}
Therefore, the number of iterations sufficient for convergence is upper bounded as
\begin{equation}\label{key}
        k\le\frac{2\norm{\bm{\tilde{Q}}}_2^2\norm{\bm Z^{\ast}}_F}{\gamma_0\mu\sigma\sqrt{2LM-\delta(\mathcal{D}(\tilde{\mathcal{R}}_G,\bm{\tilde\Theta}_0))}},
\end{equation}
which shows the number of iterations required for convergence in terms of the smoothing parameter, signature sequence length and solution accuracy. We will show in Fig. {\ref{fig:convergence_vs_mu}} that the convergence rate of the smoothed estimator $\mathscr{P}_s$ will be accelerated as  the smoothing parameter increases.

\subsection{Computation and Estimation Trade-offs}
From the geometric perspective, the smoothing term in $ \tilde{\mathcal{R}}(\bm{\Theta}) $ (with $ \mu>0 $)  enlarges the sublevel set of the regularizer $ \mathcal{R}(\bm{\Theta}) $, which results in a problem that is computationally easier to solve  with  an accelerated convergence rate. However, this geometric deformation brings a loss in the estimation accuracy according to the phase transition results in Theorem \ref{thm:robust_recovery}.  This results in a trade-off between the computational time and estimation accuracy.   
The trade-off is controllable given the statistical dimension of the decent cone of the smoothed regularizer $\tilde{\mathcal{R}}_G(\bm{\tilde\Theta})=\mathcal{R}_G(\bm{\tilde{\Theta}})+\frac{\mu}{2}\norm{\bm{\tilde \Theta}}_F^2$. In particular, the statistical dimension $\delta(\mathcal{D}(\tilde{\mathcal{R}}_G,\bm{\tilde\Theta}_0))$ can be accurately estimated by the following result.

\begin{proposition}(Statistical Dimension Bound for $\tilde{\mathcal{R}}_G$ )\label{pro:2}
        Let $\bm\Theta_0\in\mathbb{C}^{N\times M}$ be with $S$ nonzero rows, and define the
        normalized sparsity as $\rho:=S/N$. An upper bound of  the statistical dimension
        of the descent cone of $\tilde{\mathcal{R}}_G$ at $\bm{\tilde\Theta_0}=[(\Re \{\bm\Theta_0\})^T,(\Im \{\bm\Theta_0)^T\}]^T\in\mathbb{R}^{2N\times M}$
        is given by
        \begin{equation}\label{eq:stadim3}
                \begin{split}
                        &\frac{\delta(\mathcal{D}(\tilde{\mathcal{R}}_G;\bm{\tilde\Theta}_0))}{N}\leq
                        \underset{\tau\geq0}{\inf}\Big\{\rho(2M+\tau^2(1+2\mu\bar{a}+\mu^2\bar{b}))\\           
                        &+ (1-\rho)\frac{2^{1-M}}{\Gamma(M)}\int_{\tau}^{\infty}
                        (u-\tau)^2u^{2M-1}e^{-\frac{u^2}{2}}\mathrm{d} u\Big\}.
                \end{split}
        \end{equation}
        The unique optimum $\tau^\star$ which minimizes the right-hand side of \eqref{eq:stadim3} is the solution of
        \begin{equation}
                \begin{split}
                        \frac{2^{1-M}}{\Gamma(M)}\int_{\tau}^{\infty}\left(\frac{u}{\tau}-1\right)u^{2M-1}e^
                        {-\frac{u^2}{2}}\mathrm{d}u
                        =\frac{\rho(1+2\mu\bar{a}+\mu^2\bar{b})}{1-\rho},
                \end{split}
        \end{equation}
        where $\bar{a}=\frac{1}{S}\sum_{i=1}^{S}\norm{(\bm{\tilde\Theta}_0)_{\mathcal{V}_i}}_F$, $\bar{b}=\frac{1}{S}\sum_{i=1}^{S}\norm{(\bm{\tilde\Theta}_0)_{\mathcal{V}_i}}_F^2$.
\end{proposition}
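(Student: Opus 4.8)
The plan is to follow the same route as the proof of Proposition~\ref{pro:1}: apply Lemma~\ref{lemmasta} with $\mathcal{R}=\tilde{\mathcal{R}}_G$ and $\bm x=\bm{\tilde\Theta}_0$, so that it suffices to upper bound $\inf_{\tau\ge 0}\mathbb{E}[\mathrm{dist}^2(\bm g,\tau\cdot\partial\tilde{\mathcal{R}}_G(\bm{\tilde\Theta}_0))]$ and then carry out the scalar optimization over $\tau$. First I would record the subdifferential: since $\tilde{\mathcal{R}}_G(\bm{\tilde\Theta})=\mathcal{R}_G(\bm{\tilde\Theta})+\tfrac{\mu}{2}\norm{\bm{\tilde\Theta}}_F^2$ and the quadratic term is differentiable with gradient $\mu\bm{\tilde\Theta}$, the sum rule for subdifferentials gives $\partial\tilde{\mathcal{R}}_G(\bm{\tilde\Theta}_0)=\partial\mathcal{R}_G(\bm{\tilde\Theta}_0)+\mu\bm{\tilde\Theta}_0$; this set is nonempty, compact, and (whenever $\bm{\tilde\Theta}_0\neq\bm 0$ and $\mu\ge 0$) does not contain the origin, so Lemma~\ref{lemmasta} applies. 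Because $\mathcal{R}_G$ is separable over the groups $\mathcal{V}_i$, its subdifferential is a product over $i=1,\dots,N$: on a nonzero block the component is the single matrix $\widehat{\bm\Theta}_{\mathcal{V}_i}:=(\bm{\tilde\Theta}_0)_{\mathcal{V}_i}/\norm{(\bm{\tilde\Theta}_0)_{\mathcal{V}_i}}_F$, and on a zero block it is the Frobenius unit ball in $\mathbb{R}^{2\times M}$. Consequently $\mathrm{dist}^2(\bm g,\tau\cdot\partial\tilde{\mathcal{R}}_G(\bm{\tilde\Theta}_0))$ decouples into a sum of $N$ independent block terms.

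For a block $i$ in the support $\mathcal{S}$, the scaled subdifferential block is the singleton $\tau(\widehat{\bm\Theta}_{\mathcal{V}_i}+\mu(\bm{\tilde\Theta}_0)_{\mathcal{V}_i})$, so I would expand $\mathbb{E}\norm{\bm g_{\mathcal{V}_i}-\tau(\widehat{\bm\Theta}_{\mathcal{V}_i}+\mu(\bm{\tilde\Theta}_0)_{\mathcal{V}_i})}_F^2$; using $\mathbb{E}\bm g_{\mathcal{V}_i}=\bm 0$, $\mathbb{E}\norm{\bm g_{\mathcal{V}_i}}_F^2=2M$, and $\langle\widehat{\bm\Theta}_{\mathcal{V}_i},(\bm{\tilde\Theta}_0)_{\mathcal{V}_i}\rangle=\norm{(\bm{\tilde\Theta}_0)_{\mathcal{V}_i}}_F$, this equals $2M+\tau^2(1+2\mu\norm{(\bm{\tilde\Theta}_0)_{\mathcal{V}_i}}_F+\mu^2\norm{(\bm{\tilde\Theta}_0)_{\mathcal{V}_i}}_F^2)$. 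Summing over the $S$ support blocks and introducing $\bar a=\tfrac{1}{S}\sum_i\norm{(\bm{\tilde\Theta}_0)_{\mathcal{V}_i}}_F$ and $\bar b=\tfrac{1}{S}\sum_i\norm{(\bm{\tilde\Theta}_0)_{\mathcal{V}_i}}_F^2$ gives $2MS+\tau^2 S(1+2\mu\bar a+\mu^2\bar b)$. For a block $i\notin\mathcal{S}$ one has $(\bm{\tilde\Theta}_0)_{\mathcal{V}_i}=\bm 0$, so the quadratic term contributes nothing and the scaled subdifferential is merely the radius-$\tau$ Frobenius ball; hence the distance is $(\norm{\bm g_{\mathcal{V}_i}}_F-\tau)_+$, and since $\norm{\bm g_{\mathcal{V}_i}}_F$ is $\chi$-distributed with $2M$ degrees of freedom (density $\tfrac{2^{1-M}}{\Gamma(M)}u^{2M-1}e^{-u^2/2}$) its expected square is $\tfrac{2^{1-M}}{\Gamma(M)}\int_\tau^\infty(u-\tau)^2u^{2M-1}e^{-\frac{u^2}{2}}\mathrm{d}u$, with $N-S$ such blocks. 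Adding the two contributions, dividing by $N$ and using $\rho=S/N$ yields exactly the bracketed expression in \eqref{eq:stadim3}; taking $\inf_{\tau\ge 0}$ completes the bound.

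To pin down $\tau^\star$ I would differentiate the bracketed function of $\tau$ in \eqref{eq:stadim3}: the support part contributes $2\rho\tau(1+2\mu\bar a+\mu^2\bar b)$, and differentiating the integral (the boundary term at $u=\tau$ vanishes because $(u-\tau)^2$ has a double zero there) contributes $-2(1-\rho)\tfrac{2^{1-M}}{\Gamma(M)}\int_\tau^\infty(u-\tau)u^{2M-1}e^{-\frac{u^2}{2}}\mathrm{d}u$; setting the sum to zero and dividing through by $2\tau(1-\rho)$ produces the stated fixed-point equation. Uniqueness follows from strict convexity in $\tau$: the term $\rho\tau^2(1+2\mu\bar a+\mu^2\bar b)$ is strictly convex (for $\rho>0$) and each $\mathbb{E}[(\norm{\bm g_{\mathcal{V}_i}}_F-\tau)_+^2]$ is convex, so the objective has at most one stationary point. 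I expect the only real work to be bookkeeping: assembling the product structure of $\partial\tilde{\mathcal{R}}_G(\bm{\tilde\Theta}_0)$ and observing that the smoothing term perturbs the support blocks only by the \emph{deterministic} shift $\mu(\bm{\tilde\Theta}_0)_{\mathcal{V}_i}$ --- so those per-block distances reduce to exact squared norms rather than set distances --- while the off-support blocks and their $\chi$-integral are inherited verbatim from Proposition~\ref{pro:1}. The point requiring a little care is that the cross term $\mathbb{E}\langle\bm g_{\mathcal{V}_i},\widehat{\bm\Theta}_{\mathcal{V}_i}+\mu(\bm{\tilde\Theta}_0)_{\mathcal{V}_i}\rangle$ vanishes, which is what makes the support-block expectation clean.
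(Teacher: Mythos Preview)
Your proposal is correct and follows essentially the same route as the paper's proof: apply Lemma~\ref{lemmasta}, use the sum rule $\partial\tilde{\mathcal{R}}_G(\bm{\tilde\Theta}_0)=\partial\mathcal{R}_G(\bm{\tilde\Theta}_0)+\mu\bm{\tilde\Theta}_0$, exploit block separability to split the expected distance into support and non-support contributions, and then optimize over $\tau$. Your treatment is in fact slightly more explicit than the paper's --- you spell out the vanishing cross term, compute the derivative in $\tau$ directly, and argue uniqueness via strict convexity, whereas the paper simply refers to the analogous argument in \cite[Proposition~4.5]{amelunxen2014living}.
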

\begin{proof}
        Please refer to Appendix \ref{app:proof_pro2} for details.
\end{proof}
Note that $\bar{a}$ and $\bar{b}$ can be calculated given the distribution of the ground truth $\bm{\Theta}_0$. For instance, with $\bm{\Theta}_0\thicksim\mathcal{CN}(\bm 0,2\sigma^2\bm I)$, we have $\bm{\tilde\Theta}_0\thicksim\mathcal{N}(\bm 0,\sigma^2\bm I)$. Here,
$\norm{(\bm{\tilde\Theta}_0)_{\mathcal{V}_i}}_F/\sigma$ follows chi distribution
with $2M$ degrees of freedom and $\norm{(\bm{\tilde\Theta}_0)_{\mathcal{V}_i}}^2_F/\sigma^2$ follows chi square
distribution with $2M$ degrees of freedom. Hence, we can set $\bar{a}=\sqrt{2}\frac{\Gamma((2M+1)/2)}{\Gamma(M)}\sigma$, $\bar{b}=2M\sigma^2$. 

Although the convergence rate can be accelerated by increasing the smoothing
parameter as shown in the previous subsection, Proposition \ref{pro:2} suggests that a larger smoothing parameter
results in a larger statistical dimension $ \delta(\mathcal{D}(\tilde{\mathcal{R}}_G,\bm{\tilde\Theta}_0))$ as the bound in \eqref{eq:stadim3} grows with $\mu$.
This will reduce the estimation accuracy  for a given signature sequence length according to the result in Theorem
\ref{thm:robust_recovery}. Fig. {\ref{fig:estimation_error}} will demonstrate that the estimation error indeed will increase as the smoothing parameter becomes large.  Therefore, the smoothing method yields a trade-off between the computational cost and estimation accuracy, as increasing the smoothing parameter will improve the convergence rate while reduce the estimation accuracy. Such a tradeoff is particular important in scenarios
with massive IoT devices and a limited time budget, but not very stringent requirement on estimation accuracy. 

\subsection{Discussion}
For typical IoT applications, we are particularly interested in reducing the overall computational cost while maintaining the estimation accuracy, which can be achieved by interpreting the above trade-off from another perspective.  For the smoothed estimator $\mathscr{P}_s$, Proposition \ref{pro:2} together with Theorem \ref{thm:robust_recovery} can help to provide guidelines for choosing a minimal signature sequence length to maintain the estimation accuracy  for a given smoothing parameter $ \mu $. Specifically, while smoothing may increase the estimation error, we  can increase the signature sequence for the smoothed estimator $\mathscr{P}_s$ compared with the original nonsmooth estimator $\mathscr P$. Specifically, given a smoothing parameter $ \mu $, 
according to Theorem \ref{thm:robust_recovery}, we are able to maintain the
estimation accuracy by choosing the  signature sequence length $ L $ as follows
\begin{equation}\label{eq:alloc_sl}
L = \frac{\delta(\mathcal{D}(\tilde{\mathcal{R}}_G(\mu),\bm{\tilde\Theta}_0))}{2M\gamma_1},
\end{equation}
where $ \gamma_1 = \underset{\sigma>0}{\max}~\frac{\mathbb{E}_{\bm{\tilde{N}}}[R(\bm{\tilde\Theta}^{\ast})]}{\sigma^2} $  is the expectation of the worst-case estimation accuracy normalized by noise power $ \sigma^2 $.

\section{Simulation results}
\label{simu}
In this section, we   verify the phase transition phenomena in IoT networks characterized by Theorem \ref{thm:ptp3} and Theorem \ref{thm:robust_recovery} via simulations. We further simulate the developed dual-smoothed algorithm to illustrate the benefits of smoothing, as well as the trade-offs between the estimation accuracy and computational cost.
\begin{figure}[t]
                \begin{minipage}[b]{1\linewidth}
                        \centering
                        \centerline{\includegraphics[width=0.95\columnwidth]{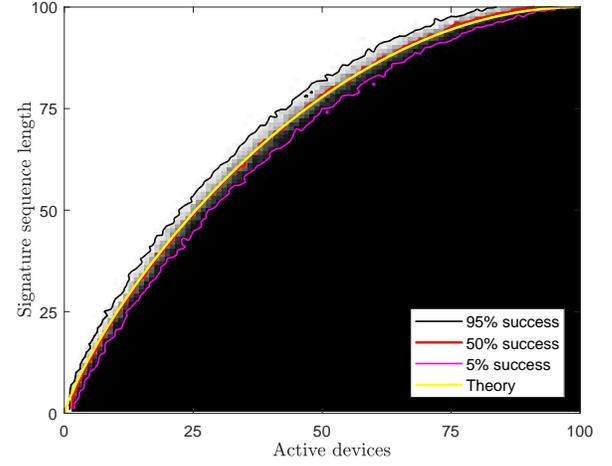}}
                        \centerline{\small(a) The noiseless  case}\medskip
                \end{minipage}
                \hfill
                \begin{minipage}[b]{1\linewidth}
                        \centering
                        \centerline{\includegraphics[width=0.95\columnwidth]{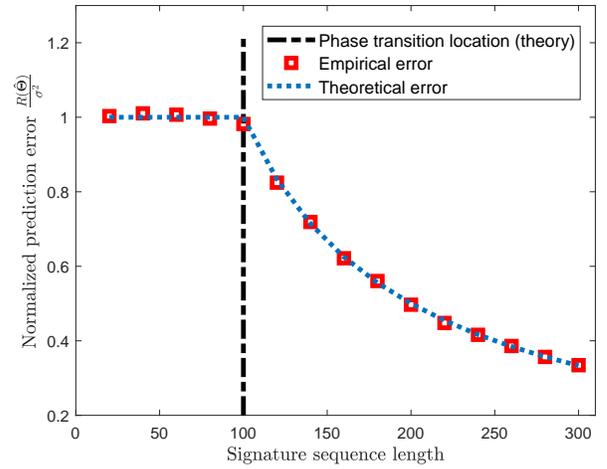}}
                        \centerline{\small(b) The noisy case}\medskip
                \end{minipage}
      
        \caption{Phase transitions in massive device connectivity. }
        \label{fig:res}
\end{figure}
 
\label{sec:sim}
\subsection{Phase Transitions}
To verify the phase transition in the noiseless  case, we consider the scenario
in which the  base station is equipped with $2$ antennas, and the total number of devices is $100$. For estimation problem $\mathscr{P}$ in this noiseless setting, the channel matrix and signature matrix are generated  as  $\bm H\thicksim\mathcal{CN}(\bm 0,\bm I)$ and $\bm Q\thicksim\mathcal{CN}(\bm 0,\bm I)$, respectively. We declare successful recovery if $\norm{\bm{\hat\Theta}-\bm{\Theta_0}}_F\leq 10^{-5}$, and we record the success probability from $50$ trials. The experiments are performed using the CVX package \cite{cvx} in Matlab with default settings.

In Fig. \ref{fig:res} (a), we show the probability of successful recovery as a function of the signature sequence length and the number of active devices. The brightness corresponds to the  empirical recovery probability (white = 100\%, black = 0\% ). On top of this heap map, the empirical curves of $5\%$, $50\%$, $95\%$  are success probabilities that are calculated from data. It can be seen that the theoretical curve from Theorem \ref{thm:ptp3} closely matches the empirical curve of the $50\%$ success probability.

To verify the phase transition in the noisy case, we consider a scenario where the base station is equipped with $3$ antennas, and the total number of devices is  $300$. We fix the number of active devices as $|\mathcal{S}|=42$, hence the theoretical phase transition location is given as $\frac{\delta(\mathcal{D}(\mathcal{R}_G,\bm{\tilde\Theta}_0))}{2M}\approx100$.
For estimation problem $\mathscr{P}$, the channel matrix is generated as
 $\bm H\thicksim\mathcal{CN}(\bm 0,\bm I)$, the signature matrix as $\bm Q\thicksim\mathcal{CN}(\bm 0,\bm I)$ and the additive noise matrix as $\bm N\thicksim\mathcal{CN}(\bm 0,0.001\bm I)$. The simulation results are  averaged for $100$ times.

In Fig. \ref{fig:res} (b), we see that the normalized prediction error can be accurately predicted by Theorem \ref{thm:robust_recovery}. The dashed blue line computed by Eq.\eqref{eq:noise1} and Eq. \eqref{eq:noise2} in Theorem \ref{thm:robust_recovery} is the theoretical prediction. The dashed black line is computed by Proposition \ref{pro:1}, and the red markers are the experimental results. We observe that the theoretical results and experimental results are closely matched, and the phase transition location is accurately predicted by Proposition \ref{pro:1}.
\subsection{Computation and Estimation  Trade-offs}
We shall verify Proposition \ref{pro:2} under the same settings as  Fig. \ref{fig:res} (a) via simulations. We fix the activity device number  as $ |\mathcal{S}| = 10$ to show the impact on the exact recovery using different smoothing parameters $ \mu $. As shown in Fig. \ref{fig:smooth}, the theoretical curve from Theorem \ref{thm:ptp3} closely matches the empirical curve of the $50\%$ success probability. Furthermore, it can be seen from Fig. \ref{fig:smooth} that increasing the smoothing parameter will result in  a larger  statistical dimension of the descent cone of $ \tilde{\mathcal{R}}(\bm\Theta)$, yielding longer signature sequences for signal recovery.   
\begin{figure}[ht]
        \centering
        \centerline{\includegraphics[width=0.95\columnwidth]{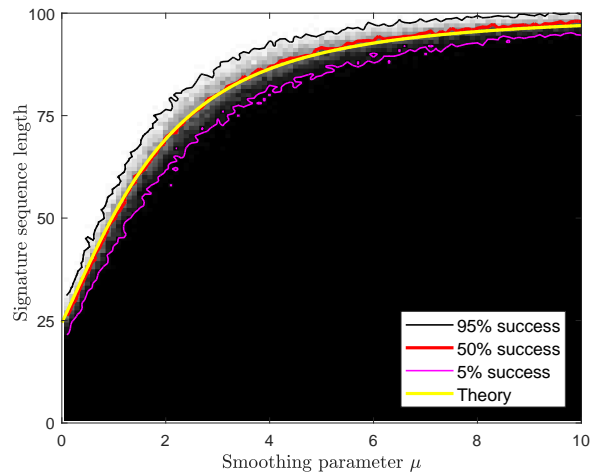}}
        \caption{Phase transitions in massive device connectivity via smoothing.}
        \label{fig:smooth}
\end{figure}

To evaluate the effectiveness of the smoothing method proposed in section \ref{sec:tradeoff}, we consider a scenario where the base station is equipped with $10$ antennas, and the total number of devices is set to be $2000$. We fix the number of active devices as $|\mathcal{S}|=100$.
For estimation problem $\mathscr{P}_s$, the channel matrix follows $\bm H\thicksim\mathcal{CN}(\bm 0,\bm I)$, the signature matrix follows $\bm Q\thicksim\mathcal{CN}(\bm 0,\bm I)$ and the additive noise matrix follows $\bm N\thicksim\mathcal{CN}(\bm 0,0.01\bm I)$. 

We compare the convergence behavior of Algorithm \ref{algo} with different amount of smoothing   under a fixed signature sequence length $ L=500 $ in Fig. \ref{fig:convergence_vs_mu}, which shows that increasing the amount of smoothing will accelerate the convergence rate significantly.
\begin{figure}[ht]
        \centering
        \centerline{\includegraphics[width=0.95\columnwidth]{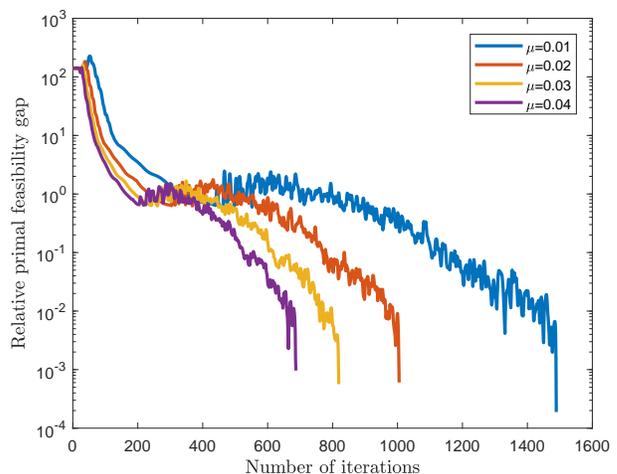}}
        \caption{Convergence behavior of Algorithm \ref{algo}.}
        \label{fig:convergence_vs_mu}
\end{figure}

Under a fixed signature sequence length $ L=500 $, we further solve problem  $\mathscr{P}_s$ using Algorithm \ref{algo} for different amount of smoothing $ \mu $ and stop it  when  $ \left|\norm{\bm{\tilde Q\tilde\Theta}-\bm{\tilde Y}}_F- \epsilon\right|/\epsilon \leq 10^{-3} $, where $ \epsilon $ is set according to \eqref{eq:eps}. The simulation results are averaged over $300$ times and are presented in Fig. \ref{fig:estimation_error}.  It can be seen that the average squared estimation error increases as the smoothing parameter $ \mu $ becomes large. This is because a larger smoothing parameter results in a larger statistical dimension $ \delta(\mathcal{D}(\tilde{\mathcal{R}}_G,\bm{\tilde\Theta}_0))$ as presented in Proposition \ref{pro:2}, and thus the estimation error increases as predicted by Theorem \ref{thm:robust_recovery}.  
\begin{figure}[t]
        \centering
        \centerline{\includegraphics[width=0.95\linewidth]{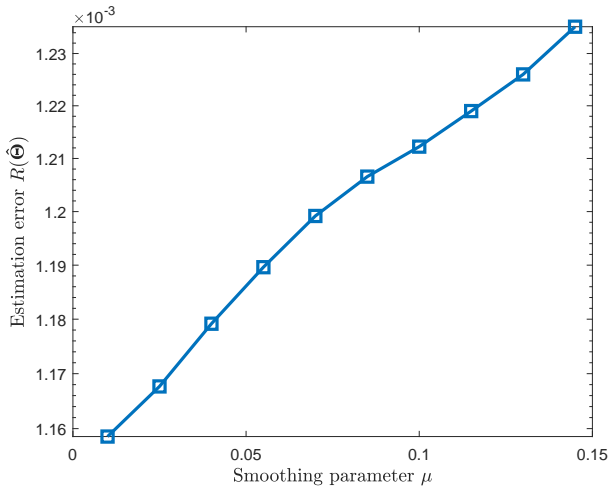}} 
        \caption{Estimation error versus smoothing parameter $ \mu $.}
        \label{fig:estimation_error}
\end{figure}
   
\section{Conclusions}
We developed a  structured group sparsity estimation approach to solve the joint active device detection and channel estimation problem in  IoT networks. Precise theoretical results were provided to characterize the location and width of the phase transition region for high-dimensional structured group sparsity estimation, which  provides theoretical guidelines for choosing the optimal signature sequence length. In particular, we observed that the transition width can be narrowed to zero asymptotically in the massive MIMO setting, yielding highly accurate phase transition location prediction. Numerical results demonstrated the accuracy of our developed theoretical results. Furthermore, we adopted the smoothing techniques to reduce the computational cost by accelerating the convergence rates, which yields a trade-off between computational cost and estimation accuracy. This was achieved by precisely characterizing the convergence rate in terms of smoothing parameter, signature sequence length and estimation accuracy via conic integral geometry. 
\appendices
\section{Proof of  theorem \ref{thm:robust_recovery}}\label{app:thm_robust_recovery}
\begin{proof}
        It turns out that the average squared prediction error satisfies \cite[ Lemma $7.1$ ]{oymak2016sharp},
        \begin{equation}\label{eq:noise3}
        2LM\cdot \underset{\sigma>0}{\max}~\frac{\mathbb{E}_{\bm{\tilde{N}}}[R(\bm{\tilde\Theta}^\ast)]}{\sigma^2}=2LM- \delta(\bm{\bar{Q}}\mathcal{C}(\bm{\tilde{\Theta}_0},\bm{\bar{Q}}^T)),
        \end{equation}
        where $\mathcal{C}(\bm{\tilde{\Theta}_0},\bm{\bar{Q}}^T)=\mathcal{D}
        (\mathcal{R}_G,\bm{\tilde{\Theta}_0})^{\circ}\cap \text{Range}(\bm{\bar{Q}}^T)$ with $\mathcal{D}(\mathcal{R}_G,\bm{\tilde{\Theta}_0})^{\circ}$ denoting the polar cone of $\mathcal{D}(\mathcal{R}_G,\bm{\tilde{\Theta}_0})$. Note that $\mathcal{C}(\bm{\tilde{\Theta}_0},\bm{\bar{Q}}^T)$  is the intersection of a cone with the uniformly random subspace, hence, Theorem \ref{thm:akf} is applicable. We split the problem into two cases.
        
        Whenever $L<\delta$, i.e., $2LM<\delta({\mathcal{D}(\mathcal{R}_G,\bm{\tilde{\Theta}_0})})$, using Theorem \ref{thm:akf}, we find with probability $1-c_1\exp(-c_2(L-\delta)^2/(NM^3))$, $\mathcal{D}
        (\mathcal{R}_G,\bm{\tilde{\Theta}_0})^{\circ}\cap \text{Range}(\bm{\bar{Q}}^T)=\{\bm 0\}$. We thus obtain $\delta(\bm{\bar{Q}}\mathcal{C}(\bm{\tilde{\Theta}_0},\bm{\bar{Q}}^T))=
        \delta(\bm{\bar{Q}}\{\bm 0\})=0$, yielding  \eqref{eq:noise1} by substituting it into  \eqref{eq:noise3}.
        
        Whenever $L>\delta$, i.e. $2LM>\delta({\mathcal{D}(\mathcal{R}_G,\bm{\tilde{\Theta}_0})})$, using the modification of Theorem \ref{thm:akf} which is given in  \cite[Proposition 13.1]{oymak2016sharp}, there exists constant $c_1,c_2>0$ such that with probability $1-c_1\exp(-c_2t^2)$, we have
        \begin{equation*}
        |\delta(\mathcal{C}(\bm{\tilde{\Theta}_0},\bm{\bar{Q}}^T))
        -(2LM-\delta({\mathcal{D}(\mathcal{R}_G,\bm{\tilde{\Theta}_0})}))|\leq t\sqrt{2MN}.
        \end{equation*}
        The rotational invariance property of statistical dimension gives
        \begin{equation*}
        \delta(\bm{\bar{Q}}\mathcal{C}(\bm{\tilde{\Theta}_0},\bm{\bar{Q}}^T))=
        \delta(\mathcal{C}(\bm{\tilde{\Theta}_0},\bm{\bar{Q}}^T)).
        \end{equation*}
        Consequently, we have
        \begin{equation*}
        |\delta(\bm{\bar{Q}}\mathcal{C}(\bm{\tilde{\Theta}_0},\bm{\bar{Q}}^T))
        -(2LM-\delta({\mathcal{D}(\mathcal{R}_G,\bm{\tilde{\Theta}_0})}))|\leq t\sqrt{2MN},
        \end{equation*}
        which  gives \eqref{eq:noise2} by combining with \eqref{eq:noise3}. 
        
        Using the fact \cite[Section 7.3]{oymak2016sharp} that
        \begin{equation}
        \underset{\sigma>0}{\max}~\frac{\mathbb{E}_{\bm{\tilde{N}}}[R(\bm{\tilde\Theta}^{\ast})]}{\sigma^2}+\underset{\sigma\rightarrow 0}{\lim}\frac{\mathbb{E}_{\bm{\tilde{N}}}[\hat{R}(\bm{\tilde{\Theta}}^\ast)]}{\sigma^2}=1,
        \end{equation} 
        gives \eqref{eq:noise11} and \eqref{eq:noise22}.
\end{proof}
   
\section{Proof of Proposition \ref{pro:1}}\label{app:proof_pro1}
\begin{proof}
        Since the regularizer $\mathcal{R}_G(\bm{\tilde\Theta}_0)$ is invariant under coordinate permutations of $\bm{\tilde\Theta}_0$. We assume without loss of generality that $\bm\Theta_0=[(\bm\theta_0^1)^{ T},\ldots,(\bm\theta_0^S)^{T},\bm 0_{M\times (N-S)}]^{ T}\in\mathbb{C}^{N\times M}$, where $\bm\theta_0^i$ are nonzero.
        Therefore, \eqref{eq:stdim_upbnd} becomes
        \begin{equation}
        \delta(\mathcal{D}(\mathcal{R}_G;\bm{\tilde\Theta_0}))\leq\underset{\tau\geq 0}{\inf}~\mathbb{E}
        [\mathrm{dist}^2(\bm G,\tau\cdot\partial\mathcal{R}_G(\bm{\tilde\Theta_0}))],
        \end{equation}
        where $\bm G\in\mathbb{R}^{2N\times M}$ has independent standard normal entries.
        
        The next step is to calculate $\partial\mathcal{R}_G(\bm{\tilde\Theta}_0)$. Assume $\bm Z\in\partial\mathcal{R}_G(\bm{\tilde\Theta}_0)$, then from the definition of the subdifferential, for any $\bm{\tilde\Theta}\in\mathbb{R}^{2N\times M}$ we have
        \begin{equation*}
        \sum_{i=1}^{N}\norm{\bm{\tilde\Theta}_{\mathcal{V}_i}}_F\geq \sum_{i=1}^{N}\norm{(\bm{\tilde\Theta}_0)_{\mathcal{V}_i}}_F+
        \sum_{i=1}^{N}\langle \bm Z_{\mathcal{V}_i}, \bm{\tilde\Theta}_{\mathcal{V}_i}-(\bm{\tilde\Theta}_0)_{\mathcal{V}_i} \rangle.
        \end{equation*}
        Specifically,  for some $\bm{\tilde\Theta}\in\mathbb{R}^{2N\times M}$ satisfying $\bm{\tilde\Theta}_{\mathcal{V}_j}=\bm 0$ for $j\neq i$, we have
        \begin{equation}\label{eq:subd}\small
        \forall \bm{\tilde\Theta}_{\mathcal{V}_i}\in\mathbb{R}^{2\times M}:\quad\norm{\bm{\tilde\Theta}_{\mathcal{V}_i}}_F\geq \norm{(\bm{\tilde\Theta}_0)_{\mathcal{V}_i}}_F+
        \langle \bm Z_{\mathcal{V}_i}, \bm{\tilde\Theta}_{\mathcal{V}_i}-(\bm{\tilde\Theta}_0)_{\mathcal{V}_i} \rangle,
        \end{equation}
        which means $\bm Z_{\mathcal{V}_i}\in\partial\norm{(\bm{\tilde\Theta}_0)_{\mathcal{V}_j}}_F$. Conversely, if \eqref{eq:subd} is satisfied for all $i\in[N]$, then summing over all indices shows that $\bm Z\in\partial\mathcal{R}_G(\bm{\tilde\Theta}_0)$. Hence,
        \begin{equation}
        \begin{split}
        &\bm Z\in\partial\mathcal{R}_G(\bm{\tilde\Theta_0})\Leftrightarrow
        \forall j\in[N]: \bm Z_{\mathcal{V}_j}\in\partial\norm{(\bm{\tilde\Theta}_0)_{\mathcal{V}_j}}_F\\
        &\Leftrightarrow
        \begin{cases}
        \bm Z_{\mathcal{V}_j}=\frac{(\bm{\tilde\Theta}_0)_{\mathcal{V}_j}}{\norm{(\bm{\tilde\Theta}_0)_{\mathcal{V}_j}}_F}
        &\mbox{for $j=1,\ldots,S$}\\
        \norm{\bm Z_{\mathcal{V}_j}}_F\leq 1
        &\mbox{for $j=S+1,\ldots,N$}.
        \end{cases}
        \end{split}
        \end{equation}
        
        Then we have
        \begin{equation}\small
        \begin{split}
        &\mathrm{dist}^2(\bm G,\tau\cdot\partial\mathcal{R}_G(\bm{\tilde\Theta}_0))\\
        &=\sum_{i=1}^{S}\norm{\bm G_{\mathcal{V}_i}-\tau\frac{(\bm{\tilde\Theta}  _0)_{\mathcal{V}_i}}{\norm{(\bm{\tilde\Theta}_0)_{\mathcal{V}_i}}_F}}_F^2+\sum_{i=S+1}^{N}
        \underset{\norm{\bm Z_{\mathcal{V}_i}}\leq 1}{\rm inf}\norm{\bm G_{\mathcal{V}_i}-\tau\bm Z_{\mathcal{V}_i}}_F^2\\
        &=\sum_{i=1}^{S}\norm{\bm G_{\mathcal{V}_i}-\tau\frac{(\bm{\tilde\Theta}  _0)_{\mathcal{V}_i}}{\norm{(\bm{\tilde\Theta}_0)_{\mathcal{V}_i}}_F}}_F^2+\sum_{i=S+1}^{N}
        \max\{\norm{\bm G_{\mathcal{V}_i}}_2-\tau,0\}^2.
        \end{split}
        \end{equation}
        Since the entries of $\bm G$ are independent standard normal,
        it has $2M$ degrees of freedom. Taking the expectation over the
        Gaussian matrix $\bm G$ gives
        \begin{equation}\label{eq:expc1}
        \begin{split}
        &\mathbb{E}[\mathrm{dist}^2(\bm G,\tau\cdot\partial\mathcal{R}_G
        (\bm{\tilde\Theta}_0))]\\
        &=S(2M+\tau^2)+(N-S)\mathbb{E}[\max\{\norm{\bm G_{\mathcal{V}_i}}_2-\tau,0\}^2]\\
        &=S(2M+\tau^2)+(N-S)\frac{2^{1-M}}{\Gamma(M)}\int_{\tau}^{\infty}
        (u-\tau)^2u^{2M-1}e^{-\frac{u^2}{2}}\mathrm{d}u,
        \end{split}
        \end{equation}
        where the following equality is applied
        \begin{equation*}
        \begin{split}
        &\mathbb{E}[\max\{\norm{\bm G_{\mathcal{V}_i}}_2-\tau,0\}^2]
        =\mathbb{E}[\max\{u-\tau,0\}^2]\\
        &=\frac{2^{1-M}}{\Gamma(M)}\int_{\tau}^{\infty}    (u-\tau)^2u^{2M-1}e^{-\frac{u^2}{2}}\mathrm{d}u
        \end{split}
        \end{equation*}
        in which $\norm{\bm G_{\mathcal{V}_i}}_F$ is replaced by a chi distribution variable $u$.
        
        Let $\rho=S/N$ and take the infimum over $\tau\geq 0$ we complete the proof of
        \eqref{eq:stdim_Rg}. The way to show that $\tau^\star$ is the unique minimizer of the righthand side of \eqref{eq:stdim_Rg} is similar to that in \cite[Proposition 4.5]{amelunxen2014living}.
\end{proof}

\section{Proof of Proposition \ref{pro:2}}\label{app:proof_pro2}
\begin{proof}
        The proof is  similar to Proposition \ref{pro:1}.
        We assume without loss of generality that $\bm\Theta_0=[(\bm\theta_0^1)^{ T},\ldots,(\bm\theta_0^S)^{T},\bm 0_{M\times (N-S)}]^{ T}\in\mathbb{C}^{N\times M}$, where $\bm\theta_0^i$ are nonzero. Therefore, \eqref{eq:stdim_upbnd} becomes
        \begin{equation}
                \delta(\mathcal{D}(\tilde{\mathcal{R}}_G;\bm{\tilde\Theta}_0))\leq\underset{\tau\geq 0}{\inf}~\mathbb{E}
                [\mathrm{dist}^2(\bm G,\tau\cdot\partial\tilde{\mathcal{R}}_G(\bm{\tilde\Theta}_0))],
        \end{equation}
        where $\bm G\in\mathbb{R}^{2N\times M}$ has independent standard normal entries.
        
        Since $\partial\tilde{\mathcal{R}}_G(\bm{\tilde\Theta}_0)=
        \partial\mathcal{R}_G(\bm{\tilde\theta}_0)+\frac{\mu}{2}
        \partial\norm{\bm{\tilde\Theta}_0}_F^2$, we have
        \begin{equation}\small
                \begin{split}
                        &\bm U\in\partial\mathcal{R}_G(\bm{\tilde\Theta}_0)\Longleftrightarrow\\
                        &\begin{cases}
                                \bm U_{\mathcal{V}_j}=(\bm{\tilde\Theta}_0)_{\mathcal{V}_j}/\norm{(\bm{\tilde\Theta}_0)_{\mathcal{V}_j}}_F
                                +\mu(\bm{\tilde\Theta}_0)_{\mathcal{V}_j}
                                &\mbox{if $j=1,\ldots,S$},\\
                                \norm{\bm U_{\mathcal{V}_j}}_F\leq 1
                                &\mbox{if $j=S+1,\ldots,N$}.
                        \end{cases}
                \end{split}
        \end{equation}
        Hence,
        \begin{equation}
                \begin{split}
                        &\mathrm{dist}^2(\bm G,\tau\cdot\partial\tilde{\mathcal{R}}_G(\bm{\tilde\Theta}_0))\\
                        &=\sum_{i=1}^{S}\norm{\bm G_{\mathcal{V}_i}-\tau((\bm{\tilde\Theta}_0)
                                _{\mathcal{V}_i}/\norm{(\bm{\tilde\Theta}_0)_{\mathcal{V}_i}}_F+
                                \mu(\bm{\tilde\Theta}_0)_{\mathcal{V}_j})}_F^2\\
                        &+\sum_{i=S+1}^{N}
                        \max\{\norm{\bm G_{\mathcal{V}_i}}_2-\tau,0\}^2.
                \end{split}
        \end{equation}
        Since the entries of $\bm G$ are independent standard normal,
        $\norm{\bm G_{\mathcal{V}_i}}_F$ follows the chi distribution
        with $2M$ degrees of freedom. Taking the expectation over the
        Gaussian matrix $\bm G$ gives
        \begin{equation}\label{eq:expc1}
                \begin{split}
                        &\mathbb{E}[\mathrm{dist}^2(\bm G,\tau\cdot\partial\mathcal{R}_G
                        (\bm{\tilde\Theta}_0))]
                        =S(2M+\tau^2(1+2\mu\bar{a}+\mu^2\bar{b}))\\
                        &+(N-S)\frac{2^{1-M}}{\Gamma(M)}\int_{\tau}^{\infty}
                        (u-\tau)^2u^{2M-1}e^{-\frac{u^2}{2}}\mathrm{d}u,
                \end{split}
        \end{equation}
        where $\bar{a}=\frac{1}{S}\sum_{i=1}^{S}\norm{(\bm{\tilde\Theta}_0)_{\mathcal{V}_i}}_F$, and
        $\bar{b}=\frac{1}{S}\sum_{i=1}^{S}\norm{(\bm{\tilde\Theta}_0)_{\mathcal{V}_i}}_F^2$.
        Let $\rho=S/N$ and taking the infimum over $\tau\geq 0$ completes the proof of \eqref{eq:stadim3}.
        
        The way to show that $\tau^\star$ is the unique minimizer of the righthand
        side of \eqref{eq:stadim3} is similar to that in \cite[Proposition 4.5]{amelunxen2014living}
\end{proof}

\bibliographystyle{ieeetr}
\bibliography{refs}
        
\end{document}